\definecolor{ao}{rgb}{0.0, 0.5, 0.0}
\newtheorem{theorem}{Theorem}
\newtheorem{lemma}{Lemma}
\newtheorem{corollary}{Corollary}[theorem]
\setlist{nosep}
\begin{document}

\title{Discrete quadratic model QUBO solution landscapes}

\author{

\IEEEauthorblockN{Tristan Zaborniak}
\IEEEauthorblockA{\textit{Department of Computer Science} \\
\textit{University of Victoria}\\
 Victoria, Canada\\
tristanz@uvic.ca}

\and

\IEEEauthorblockN{Ulrike Stege}
\IEEEauthorblockA{\textit{Department of Computer Science} \\
\textit{University of Victoria}\\
Victoria, Canada\\
ustege@uvic.ca}
}

\maketitle

\begin{abstract}

Many computational problems involve optimization over quadratically-interacting discrete variables. Known as discrete quadratic models (DQMs), these problems in general are NP-hard. Encoding DQMs as quadratic unconstrained binary optimization (QUBO) models permits them to be optimized by quantum and quantum-inspired hardware and  algorithms, which provide alternative means of sampling good solutions versus conventional, classical methods. However, converting DQMs to QUBO models changes the structure of the solution space, and can introduce invalid solutions that must be penalized to ensure that the global optimum is valid. This is accomplished by adding weighted constraints to the QUBO objective function. Here, we investigate the influence of encoding-specific connectivity, dimension, and penalty strength on the structure of QUBO-encoded DQM solution landscapes and their optimization. Specifically, we focus on one-hot and domain-wall encodings, establishing penalty strength thresholds delineating when valid and invalid solutions can be expected to occupy local minima, and outlining differences in their structural characteristics important to the preservation of local minima native to the original problem form.

\end{abstract}

\bigskip

\begin{IEEEkeywords}
discrete quadratic programming, discrete quadratic model, QUBO, penalty weights, constraint handling, quantum computing, digital annealing, domain-wall, one-hot
\end{IEEEkeywords}

\section{Introduction}\label{introduction}

Optimization of discrete quadratic models (DQMs) covers a large number of combinatorial optimization problems known to be NP-hard \cite{Chaovalitwongse2008, Lucas2014, Lodewijks}, including the quadratic assignment problem, graph problems (including graph coloring and max-cut), a large number of scheduling problems, and the travelling salesperson problem \cite{Lawler1963, ela1998, jensen2011graph, marx2004graph}. Recently, quantum computers and quantum-inspired computers (e.g., quantum annealers, the QAO algorithm executed by gate-model quantum computers, digital annealers) have been applied to solve these problems, reformulated as quadratic unconstrained binary optimization (QUBO) models, motivated by the observation that their specific architectures and solution methods can reduce the time- and space-complexity needed to solve certain problem classes versus conventional, classical methods such as simulated annealing and genetic algorithms \cite{Bian2014, farhi2016quantum, Aramon2019, zahedinejad2017combinatorial, 9045100, Crosson2021}.

Mapping a DQM to a QUBO model requires encoding the discrete variables of the model into binary variables in such a way that only quadratic interactions are permitted between binary variables. Three such encodings are currently known: \textit{one-hot}, \textit{domain-wall}, and \textit{binary} \cite{Berwald2022}. While one-hot and domain-wall encodings of discrete variables are natively quadratic, this is not guaranteed with binary encoding, where auxiliary variables are generally needed to quadratize higher-order terms if the encoding is to be lossless \cite{leib2016transmon, dattani1901quadratization, Tamura2021}. In fact, finding an optimal quadratization which uses the minimum number of auxiliary variables is known to be NP-hard \cite{dridi2018novel}. For this reason, in addition to a recent proof that domain-wall encoding is the most binary variable-efficient QUBO model encoding in the case of a general DQM \cite{Berwald2022}, one-hot and domain-wall encodings are preferred.

A significant artifact of these two encodings concerns their introduction of \textit{invalid} solutions to the QUBO model solution space, which have no meaning or interpretation with respect to the original DQM \cite{invalid} (e.g., a one-hot encoded discrete variable is a register of bits, only one of which can be 1; if there is more or less than one bit equal to 1, this is an invalid solution). As such, it must be ensured that these invalid solutions do not occupy the optimal positions within the QUBO-encoded DQM solution landscape. This is accomplished by the introduction of constraints to the QUBO model objective function, each weighted by a tunable penalty parameter of sufficient strength \cite{Verma2022, Ayodele2022}. Merely satisfying this sufficiency condition might result in the problem being more or less difficult to solve, though, as varying the strength of this penalty parameter drastically changes the structure of the solution landscape \cite{Kauffman1987, Stadler, palmer2018optimization, Garca2022}. 

Additionally, encoding DQMs as QUBO models changes the connectivity between solutions as a result of projection of the original space to a higher-dimensional space, which can alter whether a given element occupying an extremum of the original problem space occupies an extremum of the projected problem space. While this is negligible in the case that only the global optimum is of interest, many use-cases are also concerned with good solutions that are not necessarily globally optimal, in which case encoding choice matters on the basis of its connectivity. Many such projections per encoding exist, corresponding to permuted assignments of discrete variables to valid binary variable strings. To the best of our knowledge, the implications of QUBO-encoded DQM topology on solution landscape structure and optimization have not yet been studied.

In this work we systematically investigate the structure of QUBO-encoded DQM solution landscapes under one-hot and domain-wall encodings in terms of solution landscape connectivity and dimension, and under the influence of varying penalty parameter strength. First we present the one-hot and domain-wall encodings of an arbitrary, unconstrained DQM. Then, we derive and discuss features of their solution landscapes with respect to connectivity, dimension, and penalty parameter strength, focusing on the conditions under which all solutions of a given class (valid or invalid) occupy (or do not occupy) local minima. These conditions are important to establishing penalty strengths that reduce the number of invalid solutions occupying local minima while guaranteeing that valid solutions of interest (which which may comprise a set of solutions of varying, low energies) continue to occupy local minima, a requirement given that optimization algorithms overwhelmingly converge on minima.

Our findings are as follows. For one-hot QUBO-encoded DQMs, no invalid solution occupies a local minimum for sufficiently large penalty strengths. Similarly, all valid solutions occupy local minima for sufficiently large penalty strengths. Furthermore, no valid solution occupies a local minimum for sufficiently small penalty strengths. By contrast, we find for domain-wall QUBO-encoded DQMs that we cannot in general guarantee that all invalid solutions do not occupy local minima, regardless of our selection of penalty strength. Moreover, it is never the case with domain-wall QUBO-encoded DQMs that all valid solutions occupy local minima. Finally, we discuss the significance of understanding solution landscape structures to selection of an appropriate encoding for a given problem. We concentrate this discussion on the encodings investigated here, but emphasize a wider applicability.

\section{DQMs encoded as QUBO models}\label{DQMs as QUBOs}

DQMs are polynomials over discrete variables (numeric or categorical), limited to terms of degree two or less. An arbitrary DQM is expressed as follows:

\begin{equation}
    H_{DQM} = \sum_i A_{(i)}(d_i)+\sum_{i\geq j}B_{(i,j)}(d_i, d_j)
\end{equation}

\noindent where $d_i$ are discrete variables, and $A_{(i)}$ and $B_{(i,j)}$ are real-valued functions over these variables. 

To express a DQM in binary variables, we denote each variable $d_i$ by a set of \textit{sub-variables} $x_{i, \alpha}$, where $i$ refers to the variable index (register) and $\alpha$ to the index of the variable value (state). These $x_{i, \alpha}$ are such that:

\begin{equation}
    x_{i, \alpha} = 
    \begin{cases}
        1, &\text{variable }i\text{ matches value indexed by }\alpha \\
        0, &\text{otherwise}
    \end{cases}
\end{equation}

\noindent The DQM objective function may then be expressed as:

\begin{equation}\label{DQM}
    H_{DQM} = \sum_{i\geq j}\sum_{\alpha,\beta} C_{(i,j,\alpha,\beta)}x_{i,\alpha}x_{j,\beta}
\end{equation}

\noindent where the coefficients $C_{(i,j,\alpha,\beta)}$ express both the linear and quadratic interaction energies between the DQM sub-variables (given that $x_{i,\alpha}x_{i,\alpha}=x_{i,\alpha}$). Throughout this work, for simplicity we let the number of variables be $k$ and the number of values per variable be $l$. 

\begin{table*}
    \centering
    \caption{Various features of one-hot and domain-wall encodings of DQMs. Note that ``adjacent'' means with a Hamming distance of 1.}
    \label{table1}
    \begin{tabular}{ |p{4cm}||p{6cm}|p{6cm}|  }
         \hline
         Feature & One-Hot Encoding & Domain-Wall Encoding\\
         \hline
         binary variables & $kl$ & $k(l-1)$ \\
         pairwise interactions & $kl(kl-1)/2$ & $k(l-1)(k(l-1)-1)/2$ \\
         invalid solutions & $2^{kl}-l^k$ & $2^{k(l-1)}-l^k$ \\
         maximum penalty ($\times\gamma$) & $k(l-1)^2$ & $k\lfloor(l-1)/2\rfloor$ \\
         connectivity & valid solutions not adjacent & valid solutions can be adjacent \\ 
         ordering of states & arbitrary & arbitrary\\
         largest coefficient & $\max_{(i,j,\alpha,\beta)}C_{(i,j,\alpha,\beta)}$ & $\geq\max_{(i,j,\alpha,\beta)}C_{(i,j,\alpha,\beta)}$ \\
        \hline
    \end{tabular}
\end{table*}
\subsection{One-hot encoding}

Per DQM sub-variable $x_{i,\alpha}$, a one-hot QUBO encoding assigns a corresponding binary variable $b_{i, \alpha}$, identical to $x_{i, \alpha}$ except in notation, to emphasize its binary character. The following penalty function, $H_{OH}^P$, then ensures that all invalid solutions, which are with at least one register $i^\prime$ such that $\sum_\alpha b_{i^\prime, \alpha} \neq 1$, are assigned a positive penalty:

\begin{equation}\label{penalty_oh}
    H_{OH}^{P} = \sum_i\Big(\sum_\alpha b_{i, \alpha}-1\Big)^2
\end{equation}


\noindent Note that the penalty per register is proportional to the number of bits with value $1$ (minus one) squared. With $k$ registers of length $l$ each, the magnitude of this penalty therefore ranges from $0$ to $k(l-1)^2$.

This penalty is then added to the objective function while being multiplied by a penalty parameter $\gamma_{OH}$ such that the overall one-hot QUBO-encoded DQM, $H_{OH}$, is:

\begin{align}
    H_{OH} = &\sum_{i\geq j}\sum_{\alpha,\beta} C_{(i,j,\alpha,\beta)}b_{i,\alpha}b_{j,\beta} \nonumber \\ &+ \gamma_{OH}\sum_i\Big(\sum_\alpha b_{i, \alpha}-1\Big)^2
\end{align}

\noindent It is worth noting explicitly that this encoding scheme requires $kl$ binary variables and their associated linear terms, and $kl(kl-1)/2$ interaction terms to represent a DQM of $k$ variables with $l$ possible values each. Moreover, we point out that of the $2^{kl}$ possible binary solution vectors, only $l^k$ are valid in the absence of further constraints, while the remaining $2^{kl}-l^k$ solutions are invalid. Finally, note that valid solutions do not lie adjacent to one another in Hamming space; at least two bit-flips are required to move from one valid solution to another, the first of which takes the original, valid solution to an invalid solution. See Table \ref{table1} for a summary of the one-hot encoding features mentioned in this section.

\subsection{Domain-wall encoding}

Domain-wall encoding is a relatively recent innovation as compared to one-hot encoding for DQMs, and is based on the physics of domain-walls in one-dimensional Ising spin chains \cite{Chancellor2019}. While typically formulated in terms of spin variables $\sigma_{i, \alpha}\in\{-1, +1\}$ before being translated into binary variables, here we directly use binary variables. Domain-wall encoding is a unary encoding, with the value of a register represented by the position of a domain-wall within the register. Binary variables $b_{i,\alpha}$ are defined for $\alpha\in{0,..., l-2}$ (one less index than in the case of one-hot encoding), and the boundary conditions $b_{i, -1}=1$ and $b_{i, l-1}=0$ are enforced per register. Note that these enforced boundary conditions \textit{do not} correspond to physical bits used in computations.

Specifically, we replace each $x_{i,\alpha}$ in Equation \ref{DQM} with $b_{i, \alpha-1}-b_{i, \alpha}$. This transformation is such that valid solutions contain only one non-zero term (one \textit{domain-wall}) per register, while invalid solutions contain more than one domain-wall in at least one register. The penalty function, $H_{DW}^P$, ensuring that all invalid solutions are assigned a positive penalty, is given by Ref.~\cite{Codognet2022} as:

\begin{equation}\label{penalty_dw}
    H_{DW}^{P} = \sum_i\sum_{\alpha}(b_{i, \alpha}-b_{i, \alpha}b_{i, \alpha-1})
\end{equation}

\noindent Note that the penalty per register is proportional to the number of domain walls present (minus one). With $k$ registers each of length $l-1$ (as opposed to length $l$ in the one-hot encoding scheme), the magnitude of this penalty ranges from $0$ to $k\lfloor(l-1)/2\rfloor$.

As with the one-hot penalty function, this domain-wall penalty function is then added to the objective function and multiplied by a penalty parameter $\gamma_{DW}$, giving the overall domain-wall QUBO-encoded DQM, $H_{DW}$, as:

\begin{align}
    H_{DW} = &\sum_{i\geq j}\sum_{\alpha,\beta} C_{(i,j,\alpha,\beta)}(b_{i, \alpha-1}-b_{i, \alpha})(b_{j, \beta-1}-b_{j, \beta}) \nonumber \\&+ \gamma_{DW}\sum_i\sum_{\alpha}(b_{i, \alpha}-b_{i, \alpha}b_{i, \alpha-1})
\end{align}

\noindent With this encoding, we require $k(l-1)$ binary variables and their associated linear terms, and $k(l-1)(k(l-1)-1)/2 = kl(kl-1)/2-k(k+1)/2$ interaction terms, amounting to a savings of $k$ binary variables and $k(k+1)/2$ interaction terms versus a one-hot encoding of the same problem \cite{Codognet2022}. Of the $2^{k(l-1)}$ possible solution vectors, $l^k$ are valid solutions, and the remaining $2^{k(l-1)}-l^k$ are invalid solutions in the absence of further constraints. Moreover, in contrast to one-hot encoding, valid solutions under domain-wall encoding lie adjacent to between $k$ and $2k$ valid solutions (of $kl$ possible 1-local neighbors). Table \ref{table1} contains a summary of the domain-wall encoding features mentioned in this section.

\section{Solution landscape features of DQMs encoded as QUBO models}\label{The effect of penalty strength on solution landscape features}

As described above, we can represent both one-hot and domain-wall encodings of DQMs as a sum of a \textit{cost} function, $c(x)$ (Equation \ref{DQM}), and \textit{penalty} function, $p(x)$ (Equations \ref{penalty_oh} or \ref{penalty_dw}), such that their QUBO model functions, $f(x)$, can then be written as:

\begin{equation}
    f(x) = c(x)+\gamma p(x)
\end{equation}

\noindent where $x \in \{0,1\}^n$ is a binary solution vector of a length appropriate to the encoding under consideration (i.e., in our case $n\in\{kl, k(l-1)\}$), and $\gamma\in\{\gamma_{OH}, \gamma_{DW}\}$. Denoting the optimum valid solution $x^*$, we must select a $\gamma$ to satisfy that $f(x^*)<f(x^\prime)$ for all $x^\prime\in S^\prime$, where $S^\prime$ is the set of invalid solutions, so that $x^*$ occupies the global minimum of our objective function. It follows that:

\begin{equation}\label{gamma_star}
    \gamma>\gamma^* = \max_{x^\prime\in S^\prime}\Bigg(\frac{c(x^*)-c(x^\prime)}{p(x^\prime)}\Bigg)
\end{equation}

In general, for DQMs encoded as QUBOs, this quantity is not easily computable, as it requires that we find $x^*$, and we do not expect to find $x^*$ without evaluating a number of candidates exponential in the size of the solution vectors \cite{Ayodele2022}. Therefore, several heuristics for selecting a $\gamma$ that satisfies the inequality have been proposed, including setting it to the upper bound of the objective function, to the maximum QUBO coefficient, to the maximum change possible to the objective function in flipping a single bit, or using a sequential algorithm to find the minimum penalty for which the solution obtained is valid, starting from small $\gamma$ \cite{Verma2022, Ayodele2022, Garca2022}. However, it remains to be understood how the strength of the penalty parameter specifically influences the solution landscape and its optimization, provided satisfaction of Equation \ref{gamma_star}.

Qualitatively, given that the energies of valid solutions do not depend on $\gamma$ whereas those of invalid solutions do, as $\gamma$ increases from $\gamma^*$, peak height and valley depth in the solution landscape correspondingly increase. Steep QUBO solution landscapes are known to be difficult to traverse for classical algorithms such as simulated annealing \cite{palmer2018optimization}, but quantum algorithms (quantum annealing, QAOA) have been suggested to better navigate such spaces given their ability to tunnel through high energy barriers \cite{Chakrabarti2022, king2019quantum}. However, since quantum hardware implementations currently remain noisy, in combination with limited precision on qubit control, steep solution landscapes increase the failure rate of finding optimal solutions, by forcing valid solutions to occupy a narrower energy band \cite{Pearson2019, boothby2021architectural, xue2021effects}. Therefore, shallower QUBO solution landscapes at a first estimation should lend themselves to better solution by both classical and quantum approaches.

As a result, we would then set $\gamma = \gamma^*+\epsilon$, where $0<\epsilon\ll |\gamma^*|$, such that our landscape is as shallow as possible while still satisfying the condition that $x^*$ is the minimum energy solution out of all $x\in S \cup S^\prime$, $S$ being the set of valid solutions and $S^\prime$ the set of invalid solutions. We will maintain this notation throughout. However:

\begin{itemize}
    \item If we are interested in other low-energy valid solutions \cite{zucca2021diversity, pandey2022}, these solutions may not occupy local minima under such a penalty.
    \item We may be selecting for a large number of invalid solutions occupying low-energy local minima under such a penalty, which could negatively influence the performance of our search.
\end{itemize}

\noindent We say that $x_a$ is a local minimum if: $f(x_b)>f(x_a)$ for all $x_b$ such that $|x_b-x_a| = 1$. 

We note that while most classical optimization algorithms converge on local minima, this is not necessarily true of quantum optimization algorithms, which via quantum tunnelling are relatively ignorant of local solution landscape structure \cite{johnson2011quantum, Chakrabarti2022, mohseni2022ising}. However, because we desire a reasonable means way of preferentially sampling valid solutions over invalid solutions, it is often the case that a classical greedy descent post-processing algorithm is applied to quantum samples to bring them to local minima, under the assumption that these local minima are primarily valid solutions \cite{raymond2023hybrid, grant2022benchmarking}. 

While for $\gamma$ greater than some $\gamma^\dagger$ all $x^\prime\in S^\prime$ are with higher energy than all $x\in S$, for $\gamma < \gamma^\dagger$  there exists at least one $x^\prime$ with lower energy than some $x$. This simultaneously increases the possibility that a valid solution does not occupy a local minimum, and alters the possibility of invalid solutions occupying local minima due to solution landscape structure changes. These changes also shuffle the rank order of invalid solution energies as a result of the differences in slope of different invalid solutions with respect to $\gamma$, as may be seen in Figure \ref{fig_1}.

\begin{figure*}
    \centering
    \includegraphics[width=175mm]{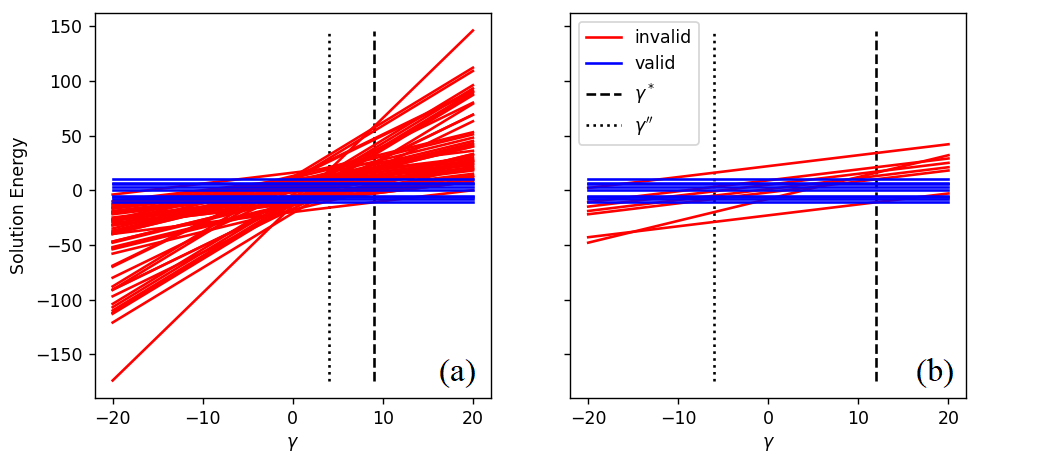}
    \caption{Valid and invalid solution energies as a function of $\gamma$ for a $k=2$, $l=3$ DQM, expressed as (a) a one-hot QUBO-encoded DQM, and (b) a domain-wall QUBO-encoded DQM. Notice that the solution energies of valid solutions are constant, while invalid solutions linearly increase in $\gamma$. Different invalid solutions exhibit different slopes and intercepts such that their rank order changes with $\gamma$, and it may be seen that steeper slopes are present in the one-hot QUBO-encoded DQM versus the domain-wall QUBO-encoded DQM. $\gamma^*$ is the penalty parameter which above which the optimal valid solution, $x^*$, occupies the global minimum. $\gamma^{\prime\prime}$ is the penalty parameter below which no valid solution occupies a local minimum.}
    \label{fig_1}
\end{figure*}

We now examine these solution landscape structure changes more rigorously. Sections \ref{one-hot} and \ref{domain-wall} begin by highlighting our findings for one-hot and domain-wall QUBO-encoded DQMs schemes, respectively, before presenting these findings in detail.

\subsection{One-hot encoding}\label{one-hot}

First, we show the existence of a problem-dependent $\gamma_{OH}^\prime$ such that for $\gamma_{OH}>\gamma_{OH}^\prime$ no invalid one-hot QUBO-encoded DQM solutions occupy local minima. Then, we show that $\gamma_{OH}^\prime$ does not necessarily equal $\gamma_{OH}^*$ by counterexample. Specifically, we demonstrate with this counterexample that $\gamma_{OH}^\prime>\gamma_{OH}^*$, indicating that whenever $\gamma_{OH}$ is large enough to isolate $x^*$ as the global minimum of $f_{OH}(x)$, it is possible that there exist invalid local minima. We then proceed to show the existence of a $\gamma_{OH}^{\prime\prime}$ such that for all $\gamma_{OH}<\gamma_{OH}^{\prime\prime}$, no valid solution occupies a local minimum. Using a similar argument, we also show that there exists a $\gamma_{OH}^{\prime\prime\prime}$ such that for all $\gamma_{OH}>\gamma_{OH}^{\prime\prime\prime}$, all valid solutions occupy local minima. Finally, we show by counterexample that $\gamma_{OH}^{\prime\prime}$ does not necessarily equal $\gamma_{OH}^*$. Specifically, we consider a case where $\gamma_{OH}^{\prime\prime} < \gamma_{OH}^*$, which indicates that when $\gamma_{OH}$ is large enough to isolate $x^*$ as the global minimum of $f_{OH}(x)$, there exist valid local minima other than that of $x^*$. We close this section with some comments on the implications of our findings for solution landscape navigability, more fully addressing these in Section \ref{discussion}.

Now, let us first establish that the change in the one-hot penalty between adjacent solutions $x_a$ and $x_b$ of Hamming distance one ($|x_b-x_a|=1$) is never zero. This property is important to our subsequent proofs, wherein this quantity appears as a denominator.

\begin{lemma}\label{lemma_1}
    Let $f_{OH}(x)=c(x)+\gamma_{OH} p(x)$ be a one-hot QUBO-encoded DQM function and $x_a$ and $x_b$ be neighboring solutions such that $|x_b-x_a|=1$. Then $p(x_b)-p(x_a)\neq 0$. 
\end{lemma}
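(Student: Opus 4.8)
The plan is to exploit the fact that a single bit-flip touches only one register, so the penalty difference collapses to a difference of two consecutive squared integers, which is necessarily odd and hence nonzero.

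First I would write out $p(x)=\sum_i\left(\sum_\alpha b_{i,\alpha}-1\right)^2$ and use the hypothesis $|x_b-x_a|=1$: the two solutions differ in exactly one coordinate, say the bit indexed $(i^*,\alpha^*)$. Every summand with register index $i\neq i^*$ takes the same value on $x_a$ and on $x_b$, so those terms cancel in $p(x_b)-p(x_a)$, leaving only the $i=i^*$ contribution. Next I would let $m_a$ denote the Hamming weight of register $i^*$ in $x_a$ (that is, $m_a=\sum_\alpha (x_a)_{i^*,\alpha}$) and $m_b$ the corresponding weight in $x_b$. Flipping one bit changes this weight by exactly one, so $m_b=m_a\pm 1$. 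Then $p(x_b)-p(x_a)=(m_b-1)^2-(m_a-1)^2=(m_b-m_a)(m_b+m_a-2)$, where the first factor is $\pm 1$ and the second factor equals $2m_a-2\pm 1$, an odd integer; hence the product is a nonzero (odd) integer. Equivalently, one can simply note that if the flipped bit goes $0\to1$ the difference is $2m_a-1$, and if it goes $1\to0$ the difference is $3-2m_a$, both odd.

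I do not expect a genuine obstacle here: once one observes that only a single register participates, the claim is a one-line parity fact. The only point requiring a little care is that the argument must not presume $x_a$ or $x_b$ is valid — $m_a$ may be any integer in $\{0,1,\dots,l\}$ — but this causes no trouble, since the difference of two consecutive squares is odd for every integer value of $m_a$. I would close by remarking that this is exactly the property needed to justify dividing by $p(x_b)-p(x_a)$ in the threshold computations of the subsequent proofs.
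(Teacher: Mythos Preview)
Your proposal is correct and essentially identical to the paper's own proof: both isolate the single affected register, write the penalty difference as the difference of two consecutive squared integers, and obtain the explicit values $\pm(2m_a-1)$ or $\pm(2m_a-3)$ depending on the direction of the flip, concluding by integrality. Your factorization $(m_b-m_a)(m_b+m_a-2)$ and the parity observation are a nice touch but not a different argument.
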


\begin{proof}
    We first choose some register $i^\prime$ in which to flip a bit, i.e.~either $0\rightarrow 1$ or $1\rightarrow 0$. Then writing $\sum_\alpha b_{i, \alpha} = N_i$, we can express the penalty of $x_a$ in Equation \ref{penalty_oh} as:

    \begin{equation}
        p(x_a) = \sum_{i\neq i^\prime}(N_i-1)^2+(N_{i^\prime}-1)^2
    \end{equation}

    \noindent Similarly, we can express the penalty of $x_b$ as:

    \begin{equation}
        p(x_b) = \sum_{i\neq i^\prime}(N_i-1)^2+((N_{i^\prime}\pm1)-1)^2
    \end{equation}

    \noindent such that their difference, $p(x_a)-p(x_b)$ is:

    \begin{align}\label{penalty_diff_oh}
        p(x_a)-p(x_b) &= (N_{i^\prime}-1)^2-((N_{i^\prime}\pm1)-1)^2 \nonumber \\ &=
        \begin{cases}
            1-2N_{i^\prime}, & 0\rightarrow 1\text{ in register }i^\prime\\
            2N_{i^\prime}-3, & 1\rightarrow 0\text{ in register }i^\prime
        \end{cases}
    \end{align}

    \noindent Given that $N_{i^\prime}$ is an integer, we can conclude that $p(x_a)-p(x_b)\neq 0$ for all $ |x_b-x_a|=1$. 
\end{proof}

We can now show that there exists a $\gamma_{OH}^\prime$ such that for $\gamma_{OH}>\gamma_{OH}^\prime$, no invalid solution occupies a local minimum. In the proof that follows, we first show that for any particular invalid solution, there exists at least one $0\rightarrow 1$ or $1\rightarrow 0$ bit-flip that reduces the overall penalty of the solution. We then demonstrate that for all such bit-flips, we can guarantee that at least one per invalid solution involves a reduction in total energy (i.e., $f_{OH}(x_b)<f_{OH}(x_a)$, where we move from $x_a\rightarrow x_b$) by selecting an appropriately large value of $\gamma_{OH}$. Finally, we guarantee that this condition is simultaneously met for all invalid solutions by selecting the maximum of the set of $\gamma_{OH}$ values found in the previous step.

\begin{theorem}\label{theorem_oh_no_invalid_local_minima}
    Let $f_{OH}(x)=c(x)+\gamma_{OH} p(x)$ be a one-hot QUBO-encoded DQM function. Denote the set of valid solutions $S$ and the set of invalid solutions $S^\prime$. Then there exists a $\gamma_{OH}^\prime$ such that for $\gamma_{OH}>\gamma_{OH}^\prime$ there is guaranteed to exist an $x_b\in S\cup S^\prime$ such that $f_{OH}(x_b)<f_{OH}(x_a)$ for all $x_a\in S^\prime$ where $|x_b-x_a|=1$.
\end{theorem}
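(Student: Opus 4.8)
The plan is to show that every invalid one-hot solution has an energy-reducing neighbor once $\gamma_{OH}$ is large enough, and then take a maximum over the finitely many invalid solutions. First I would fix an arbitrary $x_a \in S'$. By definition of invalidity, there is at least one register $i'$ with $N_{i'} \neq 1$, i.e.\ either $N_{i'} \geq 2$ or $N_{i'} = 0$. In the first case, flipping some $1 \to 0$ bit in register $i'$ gives, by Equation \ref{penalty_diff_oh}, a penalty change $p(x_a) - p(x_b) = 2N_{i'} - 3 \geq 1 > 0$; in the second case, flipping some $0 \to 1$ bit in register $i'$ gives $p(x_a) - p(x_b) = 1 - 2N_{i'} = 1 > 0$. (If \emph{all} invalid registers have $N_i = 0$ and $l \geq 1$ there is always a $0 \to 1$ move available; one must just check the degenerate edge cases, but generically this is immediate.) So every invalid solution admits at least one neighbor $x_b$ with $p(x_b) < p(x_a)$, and in fact $p(x_a) - p(x_b) \geq 1$ since the penalty difference is a nonzero integer (this is exactly where Lemma \ref{lemma_1} and the integrality observation are used).

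Next I would handle the cost term. For that chosen move $x_a \to x_b$ we have
\begin{equation}
f_{OH}(x_b) - f_{OH}(x_a) = \big(c(x_b) - c(x_a)\big) - \gamma_{OH}\big(p(x_a) - p(x_b)\big),
\end{equation}
and since $p(x_a) - p(x_b) \geq 1$, this is negative as soon as $\gamma_{OH} > c(x_b) - c(x_a)$. Define, for each $x_a \in S'$, the quantity $\delta(x_a)$ to be the minimum of $\big(c(x_b) - c(x_a)\big)/\big(p(x_a) - p(x_b)\big)$ over the (nonempty) set of penalty-reducing neighbors $x_b$; then any $\gamma_{OH} > \delta(x_a)$ guarantees a strictly energy-decreasing move out of $x_a$, so $x_a$ is not a local minimum. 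Finally, set
\begin{equation}
\gamma_{OH}' = \max_{x_a \in S'} \delta(x_a),
\end{equation}
which is well-defined because $S'$ is finite ($|S'| = 2^{kl} - l^k < \infty$). For $\gamma_{OH} > \gamma_{OH}'$, every $x_a \in S'$ simultaneously has an energy-decreasing neighbor $x_b \in S \cup S'$ with $|x_b - x_a| = 1$, which is the claim.

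The main obstacle — really the only subtle point — is verifying that every invalid solution genuinely has a \emph{penalty-decreasing} 1-flip neighbor, and doing so uniformly. The case $N_{i'} \geq 2$ is clean. The case where the only defect is empty registers ($N_{i'} = 0$) needs a $0 \to 1$ flip, which always exists because register $i'$ has $l \geq 2$ bits (the $l = 1$ case is trivial/degenerate and can be excluded), and by Equation \ref{penalty_diff_oh} that flip reduces the penalty by exactly $1$. So in all cases a penalty-reducing move exists, the set over which $\delta(x_a)$ is a minimum is nonempty, and the maximum defining $\gamma_{OH}'$ is over a finite nonempty set. One should also note explicitly that $x_b$ may itself be invalid — the theorem only asserts $x_b \in S \cup S'$ — so no care is needed to land on a valid solution; we merely need strict energy decrease, which forces $x_a$ out of the set of local minima.
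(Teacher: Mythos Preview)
Your proposal is correct and follows essentially the same approach as the paper: exhibit a penalty-reducing single-flip move out of every invalid solution by case-splitting on whether the offending register has $N_{i'}\geq 2$ or $N_{i'}=0$, then take a max--min over the finite set $S'$ to obtain $\gamma_{OH}'$. Your write-up is in fact slightly tighter, since you make explicit that the penalty difference is a positive integer (hence $\geq 1$) and that the max is well-defined by finiteness of $S'$, both of which the paper leaves implicit.
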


\begin{proof}
    For all $x_a\in S^\prime$, we claim that there must exist at least one 1-local neighbor $x_b\in S\cup S^\prime$ such that $f_{OH}(x_b)<f_{OH}(x_a)$:

    \begin{equation}\label{guaranteed_non-local_minima}
        c(x_b)+\gamma_{OH} p(x_b) < c(x_a)+\gamma_{OH} p(x_a)
    \end{equation}

    \noindent Isolating for $\gamma_{OH}$, this requires at least one of the following conditions to be true for some $x_b$ neighboring $x_a$:

    \begin{align}
        \gamma_{OH} &> \frac{c(x_b)-c(x_a)}{p(x_a)-p(x_b)},  &p(x_a)-p(x_b)>0 \label{first_condition_oh} \\
        \gamma_{OH} &< \frac{c(x_b)-c(x_a)}{p(x_a)-p(x_b)},  &p(x_a)-p(x_b)<0 \label{second_condition_oh}
    \end{align}

    \noindent Note that by Lemma \ref{lemma_1}, $p(x_a)-p(x_b)\neq0$, which avails us of having to consider the case where $p(x_a)-p(x_b)=0$.  

    We now show that for all $x_a\in S^\prime$ and $\gamma_{OH} > \gamma_{OH}^\prime$, at least one neighboring $x_b$ always satisfies the first condition (Equation \ref{first_condition_oh}). From this it follows that there exists an upper, finite bound to $\gamma_{OH}$, above which we are guaranteed to satisfy Equation \ref{guaranteed_non-local_minima} for all $x_a\in S^\prime$.

    When flipping a bit $0\rightarrow 1$ in the $i^\prime$ register of $x_a$, the difference in penalty between $x_a$ and $x_b$ is $1-2N_{i^\prime}$ (Equation \ref{penalty_diff_oh}). This expression we call $-\Delta p^+$:

    \begin{equation}\label{delta_p_plus}
        -\Delta p^+ =
        \begin{cases}
             1-2N_{i^\prime} <0, & N_{i^\prime} \neq 0\\
             1-2N_{i^\prime} >0, & N_{i^\prime} = 0
        \end{cases}
    \end{equation}

    \noindent Similarly, when flipping a bit $1\rightarrow 0$ in the $i^\prime$ register of $x_a$, we have the difference in penalty $2N_{i^\prime}-3$, which we call $-\Delta p^-$. This difference is such that:

    \begin{equation}\label{delta_p_minus}
        -\Delta p^- = 
        \begin{cases}
            2N_{i^\prime}-3 <0, & N_{i^\prime} = 1\\
            2N_{i^\prime}-3 >0, & N_{i^\prime} \neq 1
        \end{cases}
    \end{equation}

    We select an invalid register (a register $i^\prime$ whose vector $x_a(i^\prime)$ is such that $|x_a(i^\prime)|\neq 1$) in particular to flip a bit within, either $0\rightarrow 1$ or $1\rightarrow 0$. From Equations \ref{delta_p_plus} and \ref{delta_p_minus}, we can see that one of $-\Delta p^+$ or $-\Delta p^-$ is necessarily greater than zero if we select an appropriate bit-flip. That is, if $|x_a(i^\prime)|>1$ and we flip a bit from $1\rightarrow 0$, $-\Delta p^->0$, and if $|x_a(i^\prime)|<1$ and we flip a bit from $0\rightarrow 1$, $-\Delta p^+>0$. 

    At this stage, we have shown that for any $x_a\in S^\prime$ in particular, there exists a neighbor $x_b$ and a $\gamma_{OH}^\prime$ such that $f(x_b)<f(x_a)$ for $\gamma_{OH}>\gamma_{OH}^\prime$. We now identify the specific bound on $\gamma_{OH}$ that ensures that for all $x_a\in S^\prime$ there exists a neighbor $x_b$, where we require that $-\Delta p^{\pm} > 0$, such that $f_{OH}(x_b)<f_{OH}(x_a)$.

    For a specific invalid solution $x_a\in S^\prime$, the smallest $\gamma_{OH}$ possible that guarantees to admit at least one $x_b \in S\cup S^\prime$ such that $f_{OH}(x_b)<f_{OH}(x_a)$ is as follows:

    \begin{equation}
        \gamma_{OH} > \min_{x_b\in S\cup S^\prime}\Bigg(\frac{c(x_b)-c(x_a)}{p(x_a)-p(x_b)}\Bigg)
    \end{equation}

    \noindent where $|x_b-x_a|=1$ and $p(x_a)-p(x_b)>0$. Now, so that this is satisfied by one value of $\gamma_{OH}$ for all $x_a\in S^\prime$, $\gamma_{OH}$ must satisfy:

    \begin{equation}\label{gamma_oh_prime}
        \gamma_{OH} > \gamma_{OH}^\prime = \max_{x_a\in S^\prime}\Bigg\{\min_{x_b\in S\cup S^\prime}\Bigg(\frac{c(x_b)-c(x_a)}{p(x_a)-p(x_b)}\Bigg)\Bigg\}
    \end{equation}

    \noindent where the maximum is over the set of minimal $\gamma_{OH}$ values required to satisfy the existence of $x_a\rightarrow x_b$ transitions satisfying $f_{OH}(x_b)<f_{OH}(x_a)$ for all particular $x_a\in S^\prime$, again where $|x_b-x_a|=1$ and $p(x_a)-p(xb)>0$. Note that each value of this set is finite and well-defined; $|c(x_b)-c(x_a)|<\infty$ and $|p(x_a)-p(x_b)|\neq 0$. It follows that $\gamma_{OH}^\prime$ is finite and well-defined.
    
\end{proof}

We note again that $\gamma_{OH}^\prime$ represents an \textit{upper} bound to that threshold value of $\gamma_{OH}$ above which no invalid solution occupies a local minimum. That is, for $\gamma_{OH}<\gamma_{OH}^\prime$ it may be possible for all invalid solutions to not occupy local minima, supposing that Equation \ref{second_condition_oh} holds true where the Equation \ref{first_condition_oh} does not. However, we cannot guarantee that this will be always true, as the example proposed in Equation \ref{theorem_2} demonstrates.

We now consider this example in showing that $\gamma_{OH}^*$ does not equal $\gamma_{OH}^\prime$ in general. Ideally, we hope for this to be true, as it would indicate that as soon as the minimum-energy valid solution exists as the global minimum to the objective function in tuning $\gamma_{OH}$, no invalid solution occupies a local minimum. In such a case, it is then not be possible to sample invalid solutions with any quantum or classical algorithm that ends with a greedy descent step (such as the final stage of simulated annealing).

\begin{corollary}\label{theorem_oh_gammastar_neq_gammaprime}
    Let $\gamma_{OH}^*$ be the one-hot QUBO-encoded DQM penalty parameter such that for all $\gamma_{OH}>\gamma_{OH}^*$, $\min_{x\in S\cup S^\prime}f_{OH}(x) = f_{OH}(x^*)$. Let $\gamma_{OH}^\prime$ be the one-hot QUBO-encoded DQM penalty parameter such that for all $\gamma_{OH}>\gamma_{OH}^\prime$, no invalid solution $x_a\in S^\prime$ exists as a local minimum. It is not true, in general, that $\gamma_{OH}^*=\gamma_{OH}^\prime$.
\end{corollary}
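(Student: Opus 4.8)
The plan is to prove the statement by exhibiting a single explicit DQM, and in fact to establish the stronger claim $\gamma_{OH}^\prime > \gamma_{OH}^*$: I will produce a DQM, a value $\gamma_0 > \gamma_{OH}^*$, and an invalid solution $x_a \in S^\prime$ that occupies a local minimum of $f_{OH}$ at $\gamma_{OH} = \gamma_0$. By its definition $\gamma_{OH}^\prime$ upper-bounds every $\gamma_{OH}$ at which some invalid solution is a local minimum, so the existence of such a $\gamma_0$ forces $\gamma_{OH}^\prime \ge \gamma_0 > \gamma_{OH}^*$, contradicting $\gamma_{OH}^* = \gamma_{OH}^\prime$. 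To keep every quantity checkable by hand I will take the instance as small as possible --- $k = 1$, $l = 3$ (a single register of three bits; a second register would only pad the example) --- so that all eight binary strings, together with their costs $c$, penalties $p$, and Hamming-$1$ adjacencies, can be tabulated explicitly.

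The construction follows the mechanism behind Theorem \ref{theorem_oh_no_invalid_local_minima}: an invalid solution can fail to be a local minimum only through a penalty-decreasing bit-flip that is also cost-decreasing, since its penalty-increasing neighbours all rise with $\gamma_{OH}$. So I will choose the DQM coefficients $C_{(i,j,\alpha,\beta)}$ so that the intended optimal valid solution $x^*$ is cheap while the other valid solutions are expensive, and then single out an invalid $x_a$ whose penalty-decreasing neighbours --- which are weight-one register corrections, hence exactly two of those expensive valid solutions and never $x^*$ --- both lie far above $x_a$ in cost, while its remaining, penalty-increasing neighbour has cost large enough not to undercut it at the $\gamma_{OH}$ of interest. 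Using the penalty differences of Equation \ref{penalty_diff_oh}, the condition ``$x_a$ is a local minimum'' becomes a finite system of linear inequalities in $\gamma_{OH}$ cutting out an interval $(\ell,u)$, and widening the cost gaps pushes $u$ arbitrarily high. I then compute $\gamma_{OH}^*$ from Equation \ref{gamma_star}: since the non-optimal valid costs are large, the maximum there is attained at an invalid solution, and I will arrange that it lands below $u$; any $\gamma_0 \in (\max\{\gamma_{OH}^*,\ell\},\,u)$ then completes the argument.

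The one real subtlety --- the step I expect to need care --- is that $c$ must be a genuine quadratic pseudo-Boolean function, i.e.\ of the form of Equation \ref{DQM} with no cubic term, so the eight solution values are not freely assignable: once enough are fixed, the value of $c$ on the all-ones string is determined, and that value re-enters the maximum defining $\gamma_{OH}^*$. A careless choice can therefore silently inflate $\gamma_{OH}^*$ past the target $u$. The safe route is to treat the $C_{(i,j,\alpha,\beta)}$ as the free parameters, derive all eight values of $c$ from them (noting that an additive constant is irrelevant, as it cancels in every cost difference and in the numerator of Equation \ref{gamma_star}, so one may set $c$ to vanish on the all-zeros string), and then just verify the two arithmetic facts that close the proof: that $\gamma_{OH}^*$ equals the stated value, and that the chosen $x_a \in S^\prime$ is a local minimum of $f_{OH}$ at an explicit $\gamma_0$ strictly larger than it. Invoking that $\gamma_{OH}^\prime$ upper-bounds every such $\gamma_{OH}$ then yields $\gamma_{OH}^\prime \ge \gamma_0 > \gamma_{OH}^*$, which is exactly the failure of $\gamma_{OH}^* = \gamma_{OH}^\prime$.
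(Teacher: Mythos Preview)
Your high-level strategy is the paper's own: exhibit a single explicit instance in which an invalid solution remains a local minimum at some $\gamma_0 > \gamma_{OH}^*$, and conclude $\gamma_{OH}^\prime \ge \gamma_0 > \gamma_{OH}^*$. The gap is in your instance size. With $k=1$ the DQM of Equation~\ref{DQM} has no inter-register interactions, and the within-register coefficients $C_{(1,1,\alpha,\beta)}$ with $\alpha\neq\beta$ multiply $x_{1,\alpha}x_{1,\beta}$, which vanishes on every valid input and so carries no DQM content (the paper's own counterexample matrices keep these entries at zero). With those cross terms absent the cost is purely linear, $c(b)=\sum_\alpha C_\alpha b_\alpha$, and then the two-hot $x_a$ you describe cannot be made a local minimum above $\gamma_{OH}^*$. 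Concretely, take $x^*=e_m$ and $x_a=e_\alpha+e_\beta$ with $\alpha,\beta\neq m$: the penalty-decreasing neighbour $e_\alpha$ gives $f_{OH}(e_\alpha)-f_{OH}(x_a)=-C_\beta-\gamma_{OH}$, so the local-minimum condition forces $\gamma_{OH}<-C_\beta$; but the invalid $e_m+e_\beta$ already contributes $(C_m-C_m-C_\beta)/1=-C_\beta$ to the maximum in Equation~\ref{gamma_star}, whence $\gamma_{OH}^*\ge -C_\beta$. Your interval closes before it reaches $\gamma_{OH}^*$, and ``widening the cost gaps'' does not help --- making $C_\beta$ large moves both bounds in lockstep. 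Allowing a nonzero within-register term such as $C_{(1,1,\alpha,\beta)}$ breaks this coupling and does make the construction go through, but then you are no longer encoding a DQM in the paper's sense; the ``one real subtlety'' you anticipate is not merely ``no cubic term'' but the absence of any genuine quadratic term at $k=1$.

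The fix is to take $k\ge 2$ so that inter-register quadratic coefficients $C_{(i,j,\alpha,\beta)}$ with $i\neq j$ are available; these decouple the cost of your target invalid from the costs of its valid neighbours and dissolve the obstruction above. The paper does precisely this: it gives a $k=2$, $l=2$ cost matrix (with within-register off-diagonal entries zero), computes $\gamma_{OH}^*=5$ and $\gamma_{OH}^\prime=6$ directly from Equations~\ref{gamma_star} and~\ref{gamma_oh_prime}, and notes that $(1\ 0\ 0\ 0)$ --- first register valid, second register all-zero --- is an invalid local minimum for $\gamma_{OH}$ in the gap $(5,6)$. Once you move to $k=2$, your verification plan (tabulate all solutions, compute $\gamma_{OH}^*$, exhibit the invalid local minimum at an explicit $\gamma_0>\gamma_{OH}^*$) carries through verbatim.
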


\begin{proof}
    
    Consider the following one-hot QUBO-encoded DQM cost function, where $k = 2$ and $l=2$:

    \begin{equation}\label{theorem_2}
        c(x) = 
        \begin{bmatrix}
            3 & 0 & 2 & 4 \\
            0 & 3 & 1 & 2 \\
            0 & 0 & 4 & 0 \\
            0 & 0 & 0 & 7
        \end{bmatrix}
    \end{equation}

    \noindent Straightforward calculation of Equations \ref{gamma_star} and \ref{gamma_oh_prime} gives that $\gamma_{OH}^* = 5$ and $\gamma_{OH}^\prime = 6$.

\end{proof}

Additionally, for this example (Equation \ref{theorem_2}), there exists an invalid solution that occupies a local minimum when $\gamma_{OH}^* < \gamma_{OH} < \gamma_{OH}^\prime$, namely $(1 \ 0 \ 0 \ 0)$. This reinforces our claim that only when $\gamma_{OH}>\gamma_{OH}^\prime$ can we guarantee that no invalid solutions occupy local minima. Note that we generated this counterexample to the conjecture that $\gamma_{OH}^*=\gamma_{OH}^\prime$ by randomly sampling $k=2$, $l=2$ one-hot QUBO-encoded DQM instances, restricting their coefficients to the integers between 1 and 10. That a counterexample was found easily under these arbitrary restrictions suggests that such instances where $\gamma_{OH}^*\neq\gamma_{OH}^\prime$ are common in general.

We now turn our attention to the valid solutions, focusing on establishing bounds on $\gamma_{OH}$ above which all valid solutions are local minima, and below which no valid solution occupies a local minimum. We approach our proofs in a similar manner to that for Theorem \ref{theorem_oh_no_invalid_local_minima}, in this case considering our starting solutions $x_a$ to be valid.

\begin{theorem}\label{theorem 3}
    Let $f_{OH}(x) = c(x)+\gamma_{OH}p(x)$ be a one-hot QUBO-encoded DQM function. Denote the set of valid solutions $S$ and the set of invalid solutions $S^\prime$. Then there exists a $\gamma_{OH}^{\prime\prime}$ such that for $\gamma_{OH}<\gamma_{OH}^{\prime\prime}$ no $x_a\in S$ occupy local minima. Further, there exists a $\gamma_{OH}^{\prime\prime\prime}$ such that for $\gamma_{OH}>\gamma_{OH}^{\prime\prime\prime}$ all $x_a\in S$ occupy local minima.
\end{theorem}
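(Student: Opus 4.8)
The plan is to mirror the structure of the proof of Theorem \ref{theorem_oh_no_invalid_local_minima}, but now with the starting solution $x_a$ taken to be valid. For a valid $x_a$, every register $i$ has $N_i = 1$, so any single bit-flip necessarily makes one register invalid: flipping $0 \to 1$ sends $N_{i'} = 1 \to 2$, and flipping $1 \to 0$ sends $N_{i'} = 1 \to 0$. In both cases, by Equation \ref{penalty_diff_oh} with $N_{i'} = 1$, we get $p(x_a) - p(x_b) = 1 - 2 = -1$ (for $0 \to 1$) or $p(x_a) - p(x_b) = 2 - 3 = -1$ (for $1 \to 0$). So for every neighbor $x_b$ of a valid $x_a$ we have $p(x_a) - p(x_b) = -1 < 0$, i.e. the penalty strictly increases by exactly one unit on any bit-flip away from a valid solution. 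This uniformity is the key simplification relative to the invalid case.

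First I would establish the $\gamma_{OH}^{\prime\prime}$ claim (no valid local minima for small $\gamma_{OH}$). Fix a valid $x_a$. Since $p(x_a) - p(x_b) = -1$ for all its $kl$ neighbors, the condition $f_{OH}(x_b) < f_{OH}(x_a)$ rearranges (dividing by the negative quantity $p(x_a)-p(x_b) = -1$, which flips the inequality, matching Equation \ref{second_condition_oh}) to $\gamma_{OH} < c(x_b) - c(x_a) = c(x_b) - c(x_a)$; more cleanly, $f_{OH}(x_b) < f_{OH}(x_a)$ iff $c(x_b) + \gamma_{OH} p(x_b) < c(x_a) + \gamma_{OH} p(x_a)$ iff $\gamma_{OH} < c(x_a) - c(x_b)$... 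I need to track the sign carefully: $p(x_b) = p(x_a) + 1$, so the inequality is $c(x_b) + \gamma_{OH}(p(x_a)+1) < c(x_a) + \gamma_{OH} p(x_a)$, i.e. $\gamma_{OH} < c(x_a) - c(x_b)$. For $x_a$ not to be a local minimum we need this to hold for at least one neighbor, i.e. $\gamma_{OH} < \max_{x_b : |x_b - x_a| = 1}\big(c(x_a) - c(x_b)\big)$. To guarantee this for all valid $x_a$ simultaneously we take the minimum over $x_a \in S$:
\begin{equation}
    \gamma_{OH} < \gamma_{OH}^{\prime\prime} = \min_{x_a \in S}\Big\{\max_{x_b : |x_b-x_a|=1}\big(c(x_a) - c(x_b)\big)\Big\}.
\end{equation}
Since $|S| = l^k \geq 1$ and each register has finitely many neighbors, this is a finite, well-defined quantity; and for any $\gamma_{OH}$ strictly below it, every valid $x_a$ has a strictly-lower-energy neighbor, hence is not a local minimum.

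Next I would establish the $\gamma_{OH}^{\prime\prime\prime}$ claim (all valid solutions are local minima for large $\gamma_{OH}$) by the dual argument. A valid $x_a$ \emph{is} a local minimum iff $f_{OH}(x_b) > f_{OH}(x_a)$ for \emph{all} its neighbors, i.e. $\gamma_{OH} > c(x_a) - c(x_b)$ for every neighbor $x_b$, i.e. $\gamma_{OH} > \max_{x_b : |x_b-x_a|=1}\big(c(x_a) - c(x_b)\big)$. Taking the maximum over all valid $x_a$ gives
\begin{equation}
    \gamma_{OH} > \gamma_{OH}^{\prime\prime\prime} = \max_{x_a \in S}\Big\{\max_{x_b : |x_b-x_a|=1}\big(c(x_a) - c(x_b)\big)\Big\},
\end{equation}
which is again finite since both the set of valid solutions and each neighborhood are finite, and $c$ is real-valued and bounded on this finite domain. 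For any $\gamma_{OH}$ above this bound, every valid $x_a$ has strictly-higher-energy neighbors, so all valid solutions occupy local minima. I would remark that $\gamma_{OH}^{\prime\prime} \le \gamma_{OH}^{\prime\prime\prime}$ always, with equality only in degenerate cases, and that the gap between them is exactly the regime in which some but not all valid solutions are local minima.

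The main obstacle is essentially bookkeeping rather than conceptual: getting the direction of the inequality right after dividing by the negative penalty difference $p(x_a) - p(x_b) = -1$, and making sure the quantifier structure (``there exists a bad neighbor'' for the $\gamma_{OH}^{\prime\prime}$ direction versus ``all neighbors are good'' for the $\gamma_{OH}^{\prime\prime\prime}$ direction) is handled with the correct min/max nesting. There is a subtlety worth flagging explicitly: the bound $\gamma_{OH}^{\prime\prime}$ could in principle be $\le 0$, in which case the statement ``for $\gamma_{OH} < \gamma_{OH}^{\prime\prime}$ no valid solution is a local minimum'' is vacuous over the physically meaningful positive range of $\gamma_{OH}$; this happens precisely when every valid solution already has all neighbors of weakly higher cost, and it does not affect the correctness of the existence claim. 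The cleanest exposition is to treat $p(x_a) - p(x_b) = -1$ as a uniform fact first, then present the two threshold formulas in parallel, emphasizing that the only difference between them is min-of-max versus max-of-max over $S$.
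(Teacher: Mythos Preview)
Your proposal is correct and follows essentially the same approach as the paper: you use $p(x_a)-p(x_b)=-1$ for every neighbor of a valid solution to reduce the condition $f_{OH}(x_b)<f_{OH}(x_a)$ to $\gamma_{OH}<c(x_a)-c(x_b)$, then derive the min-of-max and max-of-max formulas for $\gamma_{OH}^{\prime\prime}$ and $\gamma_{OH}^{\prime\prime\prime}$ exactly as the paper does. The only differences are cosmetic (you explicitly compute the penalty difference via Equation~\ref{penalty_diff_oh} and add the remark about $\gamma_{OH}^{\prime\prime}$ possibly being non-positive), and your visible sign-tracking detour ultimately lands on the same inequality.
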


\begin{proof}
     First, we recall that all bit-flips $0\rightarrow 1$ and $1\rightarrow 0$ from a given $x_a\in S$ move to invalid solutions $x_b\in S^\prime$ where $|x_b-x_a|=1$. Such moves come with an increase in penalty (i.e., $p(x_a)-p(x_b) = -1$), and as such, for $f_{OH}(x_b)<f_{OH}(x_a)$, from Equation \ref{second_condition_oh} we require:
    
    \begin{equation}\label{gamma_without_denominator}
        \gamma_{OH} < c(x_a)-c(x_b)
    \end{equation}
    
    \noindent for some neighboring $x_b$. That Equation \ref{gamma_without_denominator} is satisfied for at least one $x_b$ for a given $x_a\in S$, we take the maximum of this expression over all neighboring $x_b$:
    
    \begin{equation}
        \gamma_{OH} < \max_{x_b\in S^\prime}\big(c(x_a)-c(x_b)\big)
    \end{equation}
    
    \noindent We may then say that all valid solutions are not local minima under one penalty parameter $\gamma_{OH}$ if:
    
    \begin{equation}\label{gamma_oh_primeprime}
        \gamma_{OH}<\gamma_{OH}^{\prime\prime} = \min_{x_a\in S}\bigg\{\max_{x_b\in S^\prime}\big(c(x_a)-c(x_b)\big)\bigg\}
    \end{equation}
    
    \noindent where $|x_b-x_a|=1$. Similarly, all valid solutions are local minima when:
    
    \begin{equation}
        \gamma_{OH} > \gamma_{OH}^{\prime\prime\prime} = \max_{x_a\in S}\bigg\{\max_{x_b\in S^\prime}\big(c(x_a)-c(x_b)\big)\bigg\}
    \end{equation}
    
    \noindent where, necessarily, $\gamma_{OH}^{\prime\prime\prime}\geq\gamma_{OH}^{\prime\prime}$ and $|x_b-x_a|=1$.
    
\end{proof}

Whereas Theorem \ref{theorem_oh_no_invalid_local_minima} for invalid solutions establishes an upper bound to $\gamma_{OH}$ above which it is guaranteed that no invalid solutions occupy local minima, Theorem \ref{theorem 3} for valid solutions provides exact thresholds on $\gamma_{OH}$ above which all valid solutions are local minima and below which all valid solutions are not local minima. This exactness is a result of $-\Delta p^{\pm}$ always equalling $-1$ when starting from a valid solution and flipping a bit. 

We now show by counterexample that $\gamma_{OH}^{\prime\prime}$ does not equal $\gamma_{OH}^*$ in general, and that it is possible that $\gamma_{OH}^{\prime\prime}<\gamma_{OH}^*$. This indicates that as soon as the minimum-energy valid solution exists as the global minimum to the objective function through tuning of $\gamma_{OH}$, it is possible that other valid solutions already occupy local minima. 

\begin{corollary}
    Let $\gamma_{OH}^*$ be the one-hot QUBO-encoded DQM penalty parameter such that for all $\gamma_{OH}>\gamma_{OH}^*$, $\min_{x\in S\cup S^\prime}f_{OH}(x) = f_{OH}(x^*)$. Let $\gamma_{OH}^{\prime\prime}$ be the one-hot QUBO-encoded DQM penalty parameter such that for all $\gamma_{OH}<\gamma_{OH}^{\prime\prime}$ no valid solution $x_a\in S$ exists as a local minimum. It is not true in general that $\gamma_{OH}^*=\gamma_{OH}^{\prime\prime}$. More specifically, there exist cases where $\gamma_{OH}^{\prime\prime}<\gamma_{OH}^*$.
\end{corollary}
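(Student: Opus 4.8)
The plan is to argue by an explicit counterexample, exactly as in the proof of Corollary~\ref{theorem_oh_gammastar_neq_gammaprime}, but I would first record a one-sided inequality that fixes which direction the counterexample must take. Every Hamming-$1$ neighbour of a valid solution is invalid (valid solutions are not adjacent under one-hot encoding), and such a neighbour $x_b$ of $x^*$ differs from $x^*$ by a single bit-flip inside one register, which then contains $0$ or $2$ ones; hence $p(x_b)=1$. Substituting $x_a=x^*$ into Equation~\ref{gamma_oh_primeprime} therefore gives
\begin{equation*}
\gamma_{OH}^{\prime\prime}\;\le\;\max_{\substack{x_b\in S'\\ |x_b-x^*|=1}}\big(c(x^*)-c(x_b)\big)\;=\;\max_{\substack{x_b\in S'\\ |x_b-x^*|=1}}\frac{c(x^*)-c(x_b)}{p(x_b)}\;\le\;\gamma_{OH}^*,
\end{equation*}
the last step because the maximum defining $\gamma_{OH}^*$ in Equation~\ref{gamma_star} is over all of $S'$. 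So $\gamma_{OH}^{\prime\prime}\le\gamma_{OH}^*$ holds unconditionally, and the corollary reduces to producing an instance in which the inequality is strict.

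From the display, strictness holds exactly when $x^*$ is not a minimiser, over $x_a\in S$, of the inner quantity $\max_{x_b\in S',\,|x_b-x_a|=1}(c(x_a)-c(x_b))$ appearing in Equation~\ref{gamma_oh_primeprime}; equivalently, when the valid solution hardest to strip of local-minimum status is some $x_a\neq x^*$. The second step is to build a small instance with exactly this asymmetry. I would take $k=2$, $l=2$ with small integer coefficients, as in Corollary~\ref{theorem_oh_gammastar_neq_gammaprime}, and arrange for $x^*$ to have a cheap invalid neighbour --- forcing the lower bound on $\gamma_{OH}^*$ to be large --- while some other valid solution has all its neighbours of comparable or larger cost, forcing its escape threshold to be small. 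For example,
\begin{equation*}
c(x)=\begin{bmatrix}1&0&3&2\\0&4&1&1\\0&0&5&0\\0&0&0&7\end{bmatrix}
\end{equation*}
yields, on enumerating the four valid and twelve invalid binary vectors, $\gamma_{OH}^*=8$ and $\gamma_{OH}^{\prime\prime}=6$, so $\gamma_{OH}^{\prime\prime}<\gamma_{OH}^*$. As with the earlier counterexample, such instances turn up readily when sampling random small-integer $k=2$, $l=2$ one-hot QUBO-encoded DQMs, suggesting they are not exceptional.

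I do not expect a serious obstacle: the verification is the routine enumeration of $2^{kl}=16$ bit-vectors together with their costs and penalties. The only genuine content is conceptual --- first, the structural bound $\gamma_{OH}^{\prime\prime}\le\gamma_{OH}^*$, which shows a counterexample can only realise strict inequality; and second, the observation that $\gamma_{OH}^{\prime\prime}$ is governed by whichever \emph{valid} solution most resists destabilisation whereas $\gamma_{OH}^*$ is governed by whichever \emph{invalid} solution is cheapest relative to its penalty, so that separating the two values merely requires a landscape in which these two roles fall to different solutions.
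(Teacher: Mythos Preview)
Your proof is correct and follows essentially the same strategy as the paper: a small explicit $k=2$, $l=2$ counterexample. The paper uses a different matrix (with coefficients in $\{0,\dots,9\}$) yielding $\gamma_{OH}^*=12$ and $\gamma_{OH}^{\prime\prime}=11$; your matrix gives $8$ and $6$, and the verification checks out.

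Where you go beyond the paper is in the structural inequality $\gamma_{OH}^{\prime\prime}\le\gamma_{OH}^*$, obtained by specialising Equation~\ref{gamma_oh_primeprime} to $x_a=x^*$ and noting that every Hamming-$1$ neighbour of $x^*$ has penalty exactly $1$. The paper does not state or prove this bound; it only exhibits one instance with strict inequality. Your observation is a genuine strengthening --- it shows the inequality can \emph{only} go one way --- and your explanation of why strictness arises (the valid solution minimising the inner max in Equation~\ref{gamma_oh_primeprime} need not be $x^*$) gives more insight into the mechanism than the paper's bare counterexample. The trade-off is that the paper's proof is shorter, while yours explains why counterexamples are easy to find rather than merely reporting that they are.
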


\begin{proof}
    
    Consider the following one-hot QUBO-encoded DQM cost matrix, where $k = 2$ and $l=2$:

    \begin{equation}\label{counterexample_2}
        c(x) = 
        \begin{bmatrix}
            7 & 0 & 5 & 4 \\
            0 & 7 & 5 & 9 \\
            0 & 0 & 2 & 0 \\
            0 & 0 & 0 & 6
        \end{bmatrix}
    \end{equation}

    \noindent Calculation of Equations \ref{gamma_star} and \ref{gamma_oh_primeprime} gives that $\gamma_{OH}^* = 12$ and $\gamma_{OH}^{\prime\prime} = 11$. The value of $\gamma_{OH}^{\prime\prime}$ stems from a valid solution $x\neq x^*$, indicating that this solution is a local minimum as soon as $x^*$ becomes the lowest-energy minimum within the solution energy landscape.

\end{proof}

Note that the above counterexample (Equation \ref{counterexample_2}) to the conjecture that $\gamma_{OH}^*=\gamma_{OH}^{\prime\prime}$ was generated by randomly sampling $k=2$, $l=2$ one-hot QUBO-encoded DQM instances, restricting their coefficients to the integers between 1 and 10. That such a counterexample was found easily under these arbitrary restrictions suggests that instances where $\gamma_{OH}^*\neq\gamma_{OH}^{\prime\prime}$ are common in general.

Commenting briefly on the applicability of the results achieved in this section, we remark that ideally-navigable QUBO-encoded DQM solution landscapes maximize the number of invalid solutions which do not occupy local minima, while simultaneously maximizing the number of non-interesting valid solutions which do not occupy local minima (maintaining $\gamma_{OH}>\gamma_{OH}^*$). We state these conditions as ideal given the well-understood point that a reduction in the number of minima in the solution landscape reduces the chances of heuristic search algorithms getting stuck in local minima (which in the case  of quantum algorithms pertains to classical post-processing of solutions). In certain cases (easily verified for $k=2$, $l=2$), it is indeed true that we can select $\gamma_{OH}$ such that the only local minimum of the full solution landscape is occupied by $x^*$, which means that from any solution we can greedily descend to this optimum, which is our desired best solution. A procedure for identifying in advance which one-hot QUBO-encoded DQM cases lend themselves to this solution landscape structure is an open question. 

Generally, in reducing $\gamma_{OH}$ from $\gamma_{OH}^{\prime\prime\prime}$, we see a decreasing number of valid solutions occupying local minima (with preferential preservation of low energy valid minima), while increasing $\gamma_{OH}$ from $\gamma_{OH}^*$ decreases the number of invalid solutions occupying local minima. For this reason, given that $\gamma_{OH}$ in the vicinity of $\gamma_{OH}^*$ minimizes the number of valid local minima and that low $\gamma_{OH}$ minimizes the jaggedness of the solution landscape, we suggest that such $\gamma_{OH}$ very likely encode solution landscapes whose search times are minimized for both classical and quantum approaches. However, to guarantee this requires further investigation.

\subsection{Domain-wall encoding}\label{domain-wall}

We start out with several qualitative observations of domain-wall QUBO-encoded DQM solution landscapes which are in contrast to their one-hot counterparts. First, whereas the 1-local neighbors of valid one-hot solutions are all invalid solutions, there exist between $k$ and $2k$ 1-local neighbors of a valid domain-wall solution that are also valid. In this sense there is a much different connectivity between solutions within the domain-wall solution landscape than the one-hot solution landscape. Secondly, given that the maximum penalty a domain-wall solution may incur is $k\lfloor(l-1)/2\rfloor$ versus $k(l-1)^2$ for the worst-case one-hot solution, the domain-wall energy landscape exists in a vertically-compressed form with respect to changing $\gamma$ versus the corresponding one-hot landscape (see Figure \ref{fig_1}). These features suggest that the structure of domain-wall QUBO-encoded DQM landscapes markedly differ from one-hot QUBO-encoded DQM solution landscapes, as we now demonstrate.

First, we show that no $\gamma_{DW}$ is sufficient to guarantee that no invalid solutions occupy local minima, where $\gamma_{DW}$ is the penalty parameter specific to domain-wall QUBO-encoded DQMs. We then proceed to show the existence of a $\gamma_{DW}^{\prime\prime}$ such that for all $\gamma_{DW}<\gamma_{DW}^{\prime\prime}$, no valid solution occupies a local minimum. Contrary to the one-hot case, we also show that there are no $\gamma_{DW}$ such that all valid solutions occupy local minima. Finally, we show by counterexample that $\gamma_{DW}^{\prime\prime}$ does not necessarily equal $\gamma_{DW}^*$. Specifically, we consider a case where $\gamma_{DW}^{\prime\prime} < \gamma_{DW}^*$, which indicates that when $\gamma_{DW}$ is large enough to isolate $x^*$ as the global minimum of $f_{DW}(x)$, there exist valid local minima other than that corresponding to $x^*$. We close this section with some comments on the implications of these findings for landscape navigability. We address these in more detail in Section \ref{discussion}.

Now, first we establish that the change in domain-wall penalty between adjacent solutions $x_a$ and $x_b$ of Hamming distance one ($|x_b-x_a|=1$) can be zero. This is important to our subsequent proofs.

\begin{lemma}\label{lemma_2}
    Let $f_{DW}(x)=c(x)+\gamma_{DW} p(x)$ be a domain-wall QUBO-encoded DQM function and $x_a$ and $x_b$ be neighboring solutions such that $|x_b-x_a|=1$. Then it is possible that $p(x_a)-p(x_b) = 0$. 
\end{lemma}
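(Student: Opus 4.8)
Because the statement claims only that $p(x_a)-p(x_b)=0$ is \emph{possible} (in contrast to Lemma~\ref{lemma_1}, which rules it out for every instance), the plan is simply to exhibit a witnessing pair of Hamming-adjacent solutions rather than to argue about all instances. The quickest route uses a structural fact already recorded in Section~\ref{DQMs as QUBOs}: under domain-wall encoding two distinct \emph{valid} solutions can be adjacent. Concretely, I would take a register whose physical bit-string is a block of ones followed by a block of zeros with the domain wall in position $\alpha$, together with the register obtained by flipping the single bit $b_{i,\alpha}$ from $0$ to $1$, which moves the wall to position $\alpha+1$; as length-$(l-1)$ strings these differ in exactly one physical bit. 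Since a valid register contains exactly one domain wall, every summand $b_{i,\alpha}-b_{i,\alpha}b_{i,\alpha-1}$ of the penalty vanishes on it, so both of the resulting global solutions $x_a,x_b\in S$ have $p(x_a)=p(x_b)=0$, whence $p(x_a)-p(x_b)=0$.

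To show the equality is not merely an artifact of comparing two valid solutions, I would also observe that $\sum_\alpha\big(b_{i,\alpha}-b_{i,\alpha}b_{i,\alpha-1}\big)=\sum_\alpha b_{i,\alpha}\big(1-b_{i,\alpha-1}\big)$ simply counts the $0\to1$ transitions in register $i$ once the boundary bit $b_{i,-1}=1$ is prepended. A single bit-flip changes the total penalty only if it creates or destroys such a transition, so any flip that does neither leaves the penalty fixed; a small explicit instance, for example $l\ge 4$ with a register reading $0\,1\,0$ versus $0\,1\,1$, then gives a Hamming-adjacent pair of invalid solutions with identical penalty.

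I do not expect a genuine obstacle here: the only care needed is to verify the exhibited pair is legitimate for at least one admissible choice of $(k,l)$ (any $l\ge 2$ suffices for the valid-pair construction), and to phrase the conclusion as an existence claim. The reason it is worth isolating before the domain-wall theorems is precisely the flip side of Lemma~\ref{lemma_1}: $p(x_a)-p(x_b)$ can no longer serve as a denominator unconditionally, so the subsequent arguments must treat the degenerate case $p(x_a)-p(x_b)=0$ separately, where the descent condition $f_{DW}(x_b)<f_{DW}(x_a)$ collapses to the $\gamma_{DW}$-independent inequality $c(x_b)<c(x_a)$.
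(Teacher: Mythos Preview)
Your proposal is correct, and the witness approach is sound: both the valid--valid pair and the invalid--invalid pair you exhibit do the job, and your rewriting of the register penalty as a count of $0\!\to\!1$ transitions is exactly the right structural observation. The paper's proof takes a slightly different route: rather than producing witnesses directly, it fixes an arbitrary register $i'$, writes the register penalty as $N_{i'}-1$ with $N_{i'}$ the number of domain walls, and then argues via an explicit length-$7$ register that a single bit-flip can take $N_{i'}$ to any of $N_{i'}-1$, $N_{i'}$, or $N_{i'}+1$, so $p(x_a)-p(x_b)\in\{-1,0,+1\}$. The difference in payoff is that your argument is shorter and hits the lemma's literal claim with minimal machinery, whereas the paper's decomposition establishes the full trichotomy, which it immediately reuses downstream (the case split in Equation~\ref{gamma_dw} and the register classes $A$--$E$ in the proof of Theorem~\ref{theorem_dw_no_invalid_local_minima}). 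If you intend to follow the paper's later arguments, you will want to record that stronger conclusion somewhere; your transition-counting observation already gives it with essentially no extra work.
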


\begin{proof}
    We first choose some register $i^\prime$ in which to flip a bit either $0\rightarrow 1$ or $1\rightarrow 0$. We then express the penalty of $x_a$ as:

    \begin{align}
        p(x_a) &= \sum_{i\neq i^\prime}\sum_{\alpha}(b_{i, \alpha}-b_{i, \alpha}b_{i, \alpha-1}) \nonumber \\& \hspace{17.5pt}+\sum_{\alpha}(b_{i^\prime, \alpha}-b_{i^\prime, \alpha}b_{i^\prime, \alpha-1}) \nonumber \\
        &=\sum_{i\neq i^\prime}(N_i-1)+(N_{i^\prime}-1)
    \end{align}

    \noindent where $N_i$ and  $N_{i^\prime}$ are the number of domain walls present in registers $i$ and $i^\prime$ (valid registers being with one domain-wall present). Similarly, we express the penalty of $x_b$ as:

    \begin{equation}
        p(x_b) = \sum_{i\neq i^\prime}(N_i-1)+(N_{i^\prime}^\prime-1)
    \end{equation}

    \noindent where $N_{i^\prime}^\prime\in\{N_{i^\prime}-1, N_{i^\prime}, N_{i^\prime}+1\}$. To see this, consider the register $i^\prime$ of length $l-1=7$ with the register:

    \begin{equation}
        x_a(i^\prime) = \texttt{1 0 1 0 0 0 1 }
    \end{equation}

    \noindent Indexing from zero, we see that to flip a bit $0\rightarrow 1$ at position 1 removes a domain wall ($N_{i^\prime}^\prime=N_{i^\prime}-1$), to flip a bit $0\rightarrow 1$ at position 3 extends a domain wall ($N_{i^\prime}^\prime=N_{i^\prime}$), and to flip a bit $0\rightarrow 1$ at position 4 adds a domain wall ($N_{i^\prime}^\prime = N_{i^\prime}+1$). Therefore, we have:

    \begin{equation}\label{delta_p_dw}
        p(x_a)-p(x_b) = N_{i^\prime}-N_{i^\prime}^\prime \in \{-1, 0, +1\}
    \end{equation}
\end{proof}

We now show that, unlike for one-hot QUBO-encoded DQMs, there does not exist, in general, an upper bound to $\gamma_{DW}$ above which no invalid solutions occupy local minima for domain-wall QUBO-encoded DQMs. In the proof that follows, we first show that there exists a class of invalid solutions whose elements are such that for all transitions to neighbors of a given element within this class, no change in penalty is incurred. We then show that for such solutions to exist as local minima, there must be at least one neighbor whose cost is lower than the starting solution. Finally, we show that this condition cannot always be satisfied.

\begin{theorem}\label{theorem_dw_no_invalid_local_minima}
    Let $f_{DW}(x)=c(x)+\gamma_{DW} p(x)$ be a domain-wall QUBO-encoded DQM function. Denote the set of valid solutions $S$ and the set of invalid solutions $S^\prime$. Then there is not guaranteed to exist a $\gamma_{DW}^\prime$ such that for $\gamma_{DW}>\gamma_{DW}^\prime$ there is at least one $x_b\in S\cup S^\prime$ such that $f_{DW}(x_b)<f_{DW}(x_a)$ for all $x_a\in S^\prime$ where $|x_b-x_a|=1$.
\end{theorem}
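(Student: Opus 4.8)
The plan is to defeat the existence of $\gamma_{DW}^\prime$ by exhibiting a single domain-wall QUBO-encoded DQM instance together with an invalid solution $x_a$ that remains a local minimum of $f_{DW}$ for arbitrarily large $\gamma_{DW}$. The engine is Lemma~\ref{lemma_2}, which I would first sharpen: flipping a bit $b_{i^\prime,j}$ changes the domain-wall count of register $i^\prime$ only when the two flanking bits agree (in the augmented string that includes the boundary bits $b_{i^\prime,-1}=1$ and $b_{i^\prime,l-1}=0$), and the flip is \emph{penalty-neutral} exactly when $b_{i^\prime,j-1}\neq b_{i^\prime,j+1}$. I would then isolate the class of invalid solutions every one of whose $1$-local neighbours is penalty-neutral, namely those for which each register's augmented string $b_{i,-1}b_{i,0}\cdots b_{i,l-1}$ satisfies $b_{i,\alpha-1}\neq b_{i,\alpha+1}$ for every interior index. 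A short parity check shows such a string can still carry two or more domain walls — hence be invalid — once $l\geq 5$; concretely, the register $b_{i,0}b_{i,1}b_{i,2}b_{i,3}=\texttt{0011}$ (augmented string \texttt{100110}) is invalid with penalty $1$, and each of its four single-bit neighbours is again invalid with penalty $1$.

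Next I would observe that for any $x_a$ in this class and any $x_b$ with $|x_b-x_a|=1$ we have $p(x_a)=p(x_b)$, so $f_{DW}(x_b)-f_{DW}(x_a)=c(x_b)-c(x_a)$ independently of $\gamma_{DW}$. Hence $x_a$ fails to be a local minimum of $f_{DW}$ at a given $\gamma_{DW}$ if and only if it fails to be a local minimum of the bare cost $c$ along these transitions (for any neighbours reachable only through penalty-\emph{increasing} flips, taking $\gamma_{DW}$ large only helps). The remaining step is to choose DQM coefficients $C_{(i,j,\alpha,\beta)}$ so that $c(x_a)<c(x_b)$ for every neighbour $x_b$ of $x_a$; since $c$, after the substitution $x_{i,\alpha}\mapsto b_{i,\alpha-1}-b_{i,\alpha}$, is an essentially unconstrained quadratic form on the free bits and the points $x_a, x_b$ are distinct, this amounts to fitting finitely many values and is easily arranged (for instance with $k=1$, $l=5$ and a single register, or $k=2$, $l=5$ with the second register valid). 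With such a $c$ fixed, $x_a$ is a local minimum of $f_{DW}$ for \emph{every} $\gamma_{DW}$, so no threshold $\gamma_{DW}^\prime$ of the claimed kind exists.

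The main obstacle is the first step: characterising exactly when the entire neighbourhood of an invalid solution is penalty-neutral, and checking that such solutions genuinely exist (the $l\geq 5$ parity argument) instead of collapsing to valid solutions, as happens for $l=3$ and $l=4$. A secondary point needing a line of justification is that the encoding map $x_{i,\alpha}=b_{i,\alpha-1}-b_{i,\alpha}$ does not over-constrain $c$ on the handful of bit-strings involved, so the desired strict cost ordering around $x_a$ is realisable by a legitimate DQM.
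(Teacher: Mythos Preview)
Your proposal is correct and follows essentially the same line as the paper: both isolate the class of invalid solutions whose every single-bit neighbour is penalty-neutral (the paper's ``class $D$''), observe that on this class $f_{DW}(x_b)-f_{DW}(x_a)=c(x_b)-c(x_a)$ is independent of $\gamma_{DW}$, and conclude by noting that a DQM can be chosen so that some such $x_a$ is a strict local minimum of $c$; your flanking-bit criterion and the explicit $l=5$ register \texttt{0011} are exactly the paper's $l-1\geq 4$ observation made concrete. The only substantive addition in your write-up is the flagged secondary check that the substitution $x_{i,\alpha}=b_{i,\alpha-1}-b_{i,\alpha}$ leaves enough freedom in $c$ to realise the required strict inequalities, which the paper dismisses as ``rather trivial'' for $k=1$, $l-1=4$ without spelling it out.
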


\begin{proof}
    To show that this is true, we assume the contrary, namely, that there exists a $\gamma_{DW}^\prime$ such that for $\gamma_{DW}>\gamma_{DW}^\prime$ there is at least one $x_b\in S\cup S^\prime$ such that $f_{DW}(x_b)<f_{OH}(x_a)$ for all $x_a\in S^\prime$ where $|x_b-x_a|=1$.

    This requires that for all $x_a\in S^\prime$, there is at least one $x_b$ such that $|x_b-x_a|=1$ satisfying:

    \begin{equation}
        c(x_b)+\gamma_{DW}p(x_b)<c(x_a)+\gamma_{DW}p(x_a)
    \end{equation}

    Unlike in the one-hot case, where $p(x_a)-p(x_b)\neq 0$ allows us to solve for $\gamma_{OH}$ as in Equations \ref{first_condition_oh} and \ref{second_condition_oh}, for a DQM problem expressed as a domain-wall encoded QUBO it is possible to have $p(x_a)-p(x_b)=0$, as per Lemma \ref{lemma_2}. Given this, for a particular $x_a\in S^\prime$ to have at least one neighbor $x_b:|x_b-x_a|=1$ such that $f_{DW}(x_b)<f_{DW}(x_a)$, there must exist an $x_b$ or $(x_b, \gamma_{DW})$ pair satisfying one of the following cases:

    \begin{equation}\label{gamma_dw}
        \begin{aligned}
            c(x_b)<c(x_a),\text{    } & N_{i^\prime}^\prime=N_{i^\prime} \\
            \gamma_{DW}>c(x_b)-c(x_a),\text{    } & N_{i^\prime}^\prime=N_{i^\prime}+1 \\
            \gamma_{DW}<c(x_a)-c(x_b),\text{    } & N_{i^\prime}^\prime=N_{i^\prime}-1
        \end{aligned}
    \end{equation}

    We now note that there exist 5 distinct classes of invalid register, some for which certain of the above cases do not ever apply when a solution consists of only these kinds of register or their combination with valid registers. These classes are as follows:

    \begin{equation}\label{classes}
        \begin{array}{lll}
            A: & -\Delta p\in\{-1, 0, +1\} & \text{e.g., }\texttt{1 0 1 0 0 0 1} \\
            B & -\Delta p\in\{0, +1\} & \text{e.g., }\texttt{1 0 1 1 0 }\\
            C: & -\Delta p\in\{-1, 0\} & \text{e.g., }\texttt{1 1 0 0 0 1 1}\\
            D: & -\Delta p = 0 & \text{e.g., }\texttt{0 0 1 1 0 }\\
            E: & -\Delta p = +1 & \text{e.g., }\texttt{0 1}
        \end{array}\nonumber
    \end{equation}

    \noindent  That a class where $-\Delta p = -1$ only (amounting to a class of solutions where we can only \textit{add} a domain-wall) does not exist can be easily verified.
    
    This means that only classes $A$, $B$, and $E$ contain invalid registers that are able of losing a domain-wall upon a single bit-flip. Therefore, a $\gamma_{DW}$ may always be selected for solutions containing only these classes of register (or in combination with valid registers) such that there exists an $x_b$ allowing the satisfaction of:

    \begin{equation}
        \gamma_{DW}>c(x_b)-c(x_a)
    \end{equation}

    \noindent where $N_{i^\prime}^\prime=N_{i^\prime}+1$, which is the second case of Equation \ref{gamma_dw}. Specifically, for a particular invalid solution $x_a$ containing either of a class $A$, class $B$, or class $E$ register, the smallest $\gamma_{DW}$ that admits this inequality is:

    \begin{equation}
        \gamma_{DW}>\min_{x_b\in S\cup S^\prime}\big(c(x_b)-c(x_a)\big)
    \end{equation}

    \noindent such that if one $\gamma_{DW}$ is to suffice for all particular instances of such $x_a$:

    \begin{equation}
        \gamma_{DW}>\gamma_{DW}^{\prime}=\max_{x_a\in S^\prime}\Bigg\{\min_{x_b\in S\cup S^\prime}\big(c(x_b)-c(x_a)\big)\Bigg\}
    \end{equation}

    Shifting our attention to those invalid solutions that do not contain a class $A$, class $B$, or class $E$ register, we consider solutions that either admit only $-\Delta p \in\{-1, 0\}$ (class $C$) or $-\Delta p = 0$ (class $D$). Focusing on class $D$ invalid solutions, which form a part of the solution space for domain-wall DQMs whenever $l-1\geq 4$, we see that for the elements of this class to not occupy local minima, there must exist for a particular $x_a$ in this class at least one 1-local neighbor $x_b$ such that $c(x_b)<c(x_a)$. This we cannot guarantee in general; considering a domain-wall QUBO-encoded DQM of one register with length $l-1 = 4$ satisfies to show this rather trivially. 
    
    We therefore conclude that in general there does not exist a $\gamma_{DW}$ that is guaranteed to ensure that all $x_a\in S^\prime$ do not occupy local minima for a domain-wall encoded DQM. However, in certain cases it is possible that this condition might be satisfied.
    
\end{proof}

We just established that for domain-wall QUBO-encoded DQMs it cannot be guaranteed that there is an upper bound on $\gamma_{DW}$ ($\gamma_{DW}^\prime$) above which all invalid solutions are not local minima. Further, we identified a subset of invalid solutions such that to increase $\gamma_{DW}$ decreases the number of solutions within this subset that occupy local minima (classes $A$, $B$ and $E$ in particular, where $-\Delta p^\pm$ can equal $+1$). For another subset of invalid solutions (classes $A$ and $C$), decreasing $\gamma_{DW}$ decreases the number of solutions within this subset that occupy local minima. Finally, we identify a subset of invalid solutions for which $\gamma_{DW}$ has no effect on whether they occupy local minima or not (class $D$), meaning that if one of this class happens to occupy a local minimum, it will do so irrespective of $\gamma_{DW}$. 


We now turn our attention to valid solutions, seeking the bound to $\gamma_{DW}$ below which all valid solutions do not occupy local minima. We find that in contrast to one-hot QUBO-encoded DQM solution landscapes, we cannot find a bound to $\gamma_{DW}$ above which all valid solutions do occupy local minima. Our proof proceeds in a manner similar to our approach to Theorem \ref{theorem_dw_no_invalid_local_minima}.

\begin{theorem}\label{theorem 6}
    Let $f_{DW}=c(x)+\gamma_{DW}p(x)$ be a domain-wall QUBO-encoded DQM function. Denote the set of valid solutions $S$ and the set of invalid solutions $S^\prime$. Then there exists a $\gamma_{DW}^{\prime\prime}$ such that for $\gamma_{DW}<\gamma_{DW}^{\prime\prime}$ no $x_a\in S$ occupy local minima. Further, there does not exist a $\gamma_{DW}^{\prime\prime\prime}$ such that all $x_a\in S$ occupy local minima for $\gamma_{DW}>\gamma_{DW}^{\prime\prime\prime}$.
\end{theorem}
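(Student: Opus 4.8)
The plan is to split the statement into its two halves and handle them separately, mirroring the structure already used for Theorems \ref{theorem 3} and \ref{theorem_dw_no_invalid_local_minima}.

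\textbf{Part 1: existence of $\gamma_{DW}^{\prime\prime}$.} First I would observe that the argument is nearly identical to the corresponding one-hot result (Theorem \ref{theorem 3}), but with a crucial caveat supplied by Lemma \ref{lemma_2}. Starting from a valid solution $x_a\in S$, every register contains exactly one domain wall; I would enumerate the effect of a single bit-flip within a chosen register $i^\prime$ on its domain-wall count, concluding that $N_{i^\prime}^\prime\in\{1,2\}$, so $p(x_a)-p(x_b)\in\{0,-1\}$. The key point is that whenever $p(x_a)-p(x_b)=0$ (the bit-flip merely slides or extends the single wall, i.e.\ moves to an adjacent \emph{valid} neighbor), the penalty term drops out and $f_{DW}(x_b)<f_{DW}(x_a)$ reduces to $c(x_b)<c(x_a)$, independent of $\gamma_{DW}$; whenever $p(x_a)-p(x_b)=-1$ (the flip creates a second wall, moving to an invalid neighbor) we need $\gamma_{DW}<c(x_a)-c(x_b)$ exactly as in the one-hot case. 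Taking, for each valid $x_a$, the maximum over all its neighbors $x_b$ (valid or invalid) of the threshold that makes \emph{some} descent move available, and then the minimum over all $x_a\in S$, yields a well-defined finite $\gamma_{DW}^{\prime\prime}$; for $\gamma_{DW}<\gamma_{DW}^{\prime\prime}$ no valid solution is a local minimum. I would note this bound is an upper bound rather than exact, precisely because of the $p(x_a)-p(x_b)=0$ case (a valid neighbor with lower cost defeats local-minimality regardless of $\gamma_{DW}$), and I would state it as such.

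\textbf{Part 2: non-existence of $\gamma_{DW}^{\prime\prime\prime}$.} This is the genuinely new content and the main obstacle. The one-hot proof of the ``all valid solutions are local minima for large $\gamma$'' direction worked because every neighbor of a valid solution was invalid, so a large enough penalty lifted all neighbors above $x_a$. Under domain-wall encoding that fails: a valid solution has between $k$ and $2k$ \emph{valid} neighbors, whose energies are $\gamma_{DW}$-independent. Hence if any valid solution $x_a$ has a valid 1-local neighbor $x_b$ with $c(x_b)<c(x_a)$, then $x_a$ is not a local minimum for \emph{any} $\gamma_{DW}$. I would argue this situation is unavoidable in general: consider the set of valid solutions as a graph under the adjacency induced by wall-sliding moves (which connects each valid configuration to valid configurations differing by moving one wall one step); on this graph the cost function $c$ restricted to $S$ takes a maximum at some valid $x_{\max}$, and any valid solution other than a cost-minimizer of $S$ — in particular any valid solution adjacent to a strictly-lower-cost valid neighbor — cannot be a local minimum of $f_{DW}$ no matter how large $\gamma_{DW}$ is. It suffices to exhibit one instance (e.g.\ a single register with $l-1=2$, i.e.\ $l=3$, where the two valid solutions $\texttt{1 0}$ and $\texttt{0 0}$ are Hamming-adjacent, giving distinct costs) in which some valid solution has a lower-cost valid neighbor; then that valid solution is never a local minimum, so no $\gamma_{DW}^{\prime\prime\prime}$ exists. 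I would present this as the clean argument: pick $c$ so that the two adjacent valid configurations of one register have different costs, and the higher-cost one is permanently non-minimal.

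\textbf{What I expect to be delicate.} The only real subtlety is making precise the claim ``valid solutions are mutually adjacent via wall-sliding moves'' and ensuring the exhibited counterexample genuinely has a valid solution that is \emph{not} the global valid minimum yet is adjacent to a lower-cost valid solution — i.e.\ that the adjacency is nonempty, which fails only in the degenerate $l=2$ (one bit per register) case but holds for all $l\geq 3$. I would therefore phrase Part 2 for $l\geq 3$, note $l=2$ is subsumed by the one-hot-like behaviour, and conclude that in general no $\gamma_{DW}^{\prime\prime\prime}$ exists. The rest is bookkeeping analogous to the earlier proofs.
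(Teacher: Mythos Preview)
Your plan tracks the paper's proof closely. For Part~1 you correctly identify that from a valid $x_a$ every single bit-flip gives $p(x_a)-p(x_b)\in\{0,-1\}$, split into the valid-neighbor case ($c(x_b)<c(x_a)$, penalty-independent) and the invalid-neighbor case ($\gamma_{DW}<c(x_a)-c(x_b)$), and arrive at the paper's min--max expression for $\gamma_{DW}^{\prime\prime}$. The paper restricts the outer minimum to those $x_a$ that are already local minima within the valid-only subgraph; your ``max over all neighbors'' collapses to exactly this once you note that a lower-cost valid neighbor makes the threshold $+\infty$, so there is no real difference.

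For Part~2 you have the right mechanism---valid solutions have valid Hamming-1 neighbors, and a valid $x_a$ with a strictly-lower-cost valid neighbor can never be a local minimum regardless of $\gamma_{DW}$---but the write-up wobbles in two places. First, ``any valid solution other than a cost-minimizer of $S$ \ldots\ cannot be a local minimum'' overstates: only those valid solutions with a lower-cost valid \emph{neighbor} are disqualified, not every non-minimizer. Second, and more importantly, the fallback ``it suffices to exhibit one instance'' does \emph{not} suffice. The theorem is stated for an arbitrary $f_{DW}$, so you must show that \emph{every} non-trivial DQM has at least one valid $x_a$ with a lower-cost valid neighbor. The paper does this by a short contradiction: if no valid $x_a$ had such a neighbor, then along every valid--valid edge each endpoint would have cost $\le$ the other, forcing all valid costs equal on the (connected) valid-adjacency graph---the trivial DQM. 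Your $x_{\max}$ idea is essentially this argument and can be made rigorous the same way; just drop the single-instance paragraph and state the general claim directly.
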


\begin{proof}
    First, we point out that all single bit-flip moves away from a given $x_a\in S$ move to either invalid solutions that contain one additional domain-wall as compared to $x_a$, or to valid solutions having the same number of domain-walls as $x_a$. (All valid solutions have $k$ total domain-walls, uniformly distributed across $k$ registers.)

    Calling $x_b$ the solution moved to upon a single bit-flip from $x_a$, if $x_b\in S^\prime$, it is clear from Equation \ref{delta_p_dw} that $p(x_a)-p(x_b)=-1$. Similarly, if $x_b\in S$, we have that $p(x_a)-p(x_b)=0$. Therefore, for a particular $x_a\in S$ to have at least one neighbor $x_b$ where $|x_b-x_a|=1$ such that $f_{DW}(x_b)<f_{DW}(x_a)$, there must exist an $x_b$ or $(x_b, \gamma_{DW})$ pair satisfying either:

    \begin{equation}\label{gamma_cases_dw}
        \begin{aligned}
            c(x_b)<c(x_a),\text{    } &x_b\in S \\
            \gamma_{DW} < c(x_a)-c(x_b),\text{    } &x_b\in S^\prime
        \end{aligned}
    \end{equation}

    \noindent That the first case is satisfied is true except for when $c(x_a)<c(x_b)$ $\forall x_b\in S:|x_b-x_a|=1$. Therefore, to guarantee that all $x_a\in S$ do not occupy local minima, we focus on making sure that the second case holds true for such $x_a$, where $c(x_a)<c(x_b)$ $\forall x_b\in S:|x_b-x_a|=1$. Now, for a particular such $x_a$, that there exists at least one $x_b\in S^\prime$ satisfying this expression (case 2, Equation \ref{gamma_cases_dw}), we take the maximum over all neighboring $x_b\in S^\prime$:

    \begin{equation}\label{double_prime_dw}
        \gamma_{DW}<\max_{x_b\in S^\prime}\big(c(x_a)-c(x_b)\big)
    \end{equation}

    \noindent We may then say that all valid solutions are not local minima under one penalty parameter $\gamma_{DW}$ if:

    \begin{equation}
        \gamma_{DW}<\gamma_{DW}^{\prime\prime}=\min_{x_a\in S}\Bigg\{\max_{x_b\in S^\prime} \big(c(x_a)-c(x_b)\big)\Bigg\}
    \end{equation}
    
    \noindent where $x_a:c(x_a)<c(x_b)$ $\forall x_b\in S:|x_b-x_a|=1$.

    Considering again the cases of Equation \ref{gamma_cases_dw}, we note that irrespective of our choice of $\gamma_{DW}$, there must exist at least one $x_a\in S$ with a set of 1-local neighbors in $S$ such that for a particular $x_b$ in this set $c(x_b)<c(x_a)$ is satisfied. If not, then all $x_a\in S$ would be such that $c(x_a)<c(x_b)$ $\forall x_b\in S:|x_b-x_a|=1$, which implies a contradiction unless $\forall x_a, x_b\in S$, $c(x_a)=c(x_b)$, which corresponds to the trivial DQM. Therefore, there does not exist a $\gamma_{DW}^{\prime\prime\prime}$ above or below which all valid solutions occupy local minima.   
\end{proof}

We now show by counterexample that $\gamma_{DW}^{\prime\prime}\neq\gamma_{DW}^*$ in general, and indeed that it is possible that $\gamma_{DW}^{\prime\prime}<\gamma_{DW}^*$, which indicates that as soon as the condition is achieved that the minimum-energy valid solution exists as the global minimum to the objective function, it is possible that other valid solutions might already occupy local minima. 

\begin{corollary}
    Let $\gamma_{DW}^*$ be the domain-wall QUBO-encoded DQM penalty parameter such that for all $\gamma_{DW}>\gamma_{DW}^*$, $\min_{x\in S\cup S^\prime}f_{DW}(x) = f_{DW}(x^*)$. Let $\gamma_{DW}^{\prime\prime}$ be the domain-wall QUBO-encoded DQM penalty parameter such that for all $\gamma_{DW}<\gamma_{DW}^{\prime\prime}$ no valid solution $x_a\in S$ exists as a local minimum. It is not true in general that $\gamma_{DW}^*=\gamma_{DW}^{\prime\prime}$. More specifically, there exist cases where $\gamma_{DW}^{\prime\prime}<\gamma_{DW}^*$.
\end{corollary}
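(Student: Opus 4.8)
The plan is to establish the claim by explicit counterexample, exactly in the spirit of the preceding corollaries. I would fix the smallest parameters for which both quantities are non-trivial and several distinct valid solutions coexist, namely $k=2$ and $l=3$, so that the domain-wall encoding uses $n=k(l-1)=4$ binary variables, giving $l^k=9$ valid and $2^4-9=7$ invalid solution vectors. I would then write down a concrete $4\times 4$ upper-triangular cost array $c(x)$ with small integer entries (obtained by random sampling, as with the earlier counterexamples) to serve as the DQM instance, and report the resulting values of $\gamma_{DW}^*$ and $\gamma_{DW}^{\prime\prime}$, exhibiting $\gamma_{DW}^{\prime\prime}<\gamma_{DW}^*$.

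With the instance fixed, the next step is to enumerate the solution space. Using the substitution $x_{i,\alpha}\mapsto b_{i,\alpha-1}-b_{i,\alpha}$ together with the boundary conditions $b_{i,-1}=1$ and $b_{i,l-1}=0$, I would identify which of the $16$ binary vectors encode valid states (exactly one domain wall per register) and which are invalid, and tabulate $c(x)$ for every vector. From this table $\gamma_{DW}^*$ follows directly from Equation~\ref{gamma_star}, i.e.\ as $\max_{x'\in S'}\big((c(x^*)-c(x'))/p(x')\big)$ where $x^*$ is the minimum-cost valid vector.

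To compute $\gamma_{DW}^{\prime\prime}$ I would apply the formula derived inside the proof of Theorem~\ref{theorem 6}: first single out the valid solutions $x_a$ that are minima with respect to their \emph{valid} $1$-local neighbours (those with $c(x_a)<c(x_b)$ for every valid $x_b$ at Hamming distance one — by the domain-wall adjacency structure each register contributes one or two such neighbours), and among these take $\min_{x_a}\big\{\max_{x_b\in S',\,|x_b-x_a|=1}(c(x_a)-c(x_b))\big\}$. The design target is to make this minimum strictly smaller than $\gamma_{DW}^*$; concretely I would arrange a near-optimal valid solution to sit in a shallow valid pocket (so its deepest invalid neighbour is only slightly below it) while some invalid vector still requires a comparatively large penalty to be lifted above $x^*$.

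The main obstacle is bookkeeping rather than conceptual: the domain-wall map from bit strings to sub-variable assignments is easy to mishandle, in particular invalid registers of the form $0\,1$ produce $x_{i,\alpha}=-1$ contributions that must be carried correctly through $c$, and one must verify that the candidate $x_a$ used for $\gamma_{DW}^{\prime\prime}$ genuinely has no lower-cost valid neighbour, since otherwise it does not enter the minimisation at all. Once a valid instance is in hand, confirming $\gamma_{DW}^{\prime\prime}<\gamma_{DW}^*$ is a finite check over the $16$ vectors. I would close, as in the one-hot case, by noting that such instances arise readily under mild integer-coefficient restrictions, which suggests $\gamma_{DW}^{\prime\prime}\neq\gamma_{DW}^*$ is the generic situation.
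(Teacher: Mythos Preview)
Your proposal is correct and mirrors the paper's own proof: both proceed by explicit counterexample at $k=2$, $l=3$, randomly sampling a small-integer $4\times 4$ cost array and directly computing $\gamma_{DW}^*$ via Equation~\ref{gamma_star} and $\gamma_{DW}^{\prime\prime}$ via the formula in Theorem~\ref{theorem 6}. The paper's chosen instance yields $\gamma_{DW}^*=3$ and $\gamma_{DW}^{\prime\prime}=-3$, and your outlined bookkeeping checks (sign handling for invalid registers, verifying the candidate $x_a$ has no cheaper valid neighbour) are exactly the right places to be careful.
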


\begin{proof}

    Consider the following domain-wall QUBO-encoded DQM cost function, where $k = 2$ and $l = 3$:

    \begin{equation}\label{counterexample_3}
        c(x) = 
        \begin{bmatrix}
            4 & -2 & 3 & 1 \\
            0 & 1 & 2 & -2 \\
            0 & 0 & -1 & 4 \\
            0 & 0 & 0 & -4
        \end{bmatrix}-5        
    \end{equation}

    \noindent Calculation of Equations \ref{gamma_star} and \ref{double_prime_dw} gives that $\gamma_{DW}^* = 3$ and $\gamma_{DW}^{\prime\prime} =-3$, concluding the proof.
    
\end{proof}

We note that the above counterexample (Equation \ref{counterexample_3}) to the conjecture that $\gamma_{DW}^*=\gamma_{DW}^{\prime\prime}$ was generated by randomly sampling $k=2$, $l=3$ domain-wall QUBO-encoded DQM instances, restricting their coefficients to the integers between $-9$ and $9$. That such a counterexample was found easily under these arbitrary restrictions suggest that such instances where $\gamma_{DW}^*>\gamma_{DW}^{\prime\prime}$ are common in general.

We now comment briefly on the applicability of the results of this section. As noted in the one-hot QUBO-encoded DQM section above, ideally-navigable QUBO-encoded DQM solution landscapes maximize the number of invalid solutions not occupying local minima, and simultaneously maximize the number of non-interesting valid solutions not occupying local minima, maintaining $\gamma_{DW}>\gamma_{DW}^*$. Unlike in the one-hot QUBO-encoded DQM case, here we are unable to claim with certainty that the number of invalid solutions either increases or decreases with increasing $\gamma_{DW}$, beginning from $\gamma_{DW}^*$. However, we can say that the number of valid solutions occupying local minima decreases if $\gamma_{DW}\gg\gamma_{DW}^{\prime\prime}$ and is subsequently reduced toward $\gamma_{DW}^{\prime\prime}$. In certain cases (easily verified by selecting from random $k=2$, $l=2$ instances), indeed, we can select $\gamma_{DW}$ such that the only local minimum of the full solution landscape is occupied by $x^*$. This means that from any solution we can greedily descend to this optimum, which is our desired best solution. We are currently not aware of a procedure for identifying \textit{a priori} DQM problems that may be encoded this way.

\section{Discussion}\label{discussion}

The task of solving DQMs as QUBO models given the availability of increasingly-powerful, special-purpose QUBO solvers (including quantum annealers, the QAO algorithm
executed by gate-model quantum computers, and digital annealers) motivated our investigation into the structure of their solution landscapes upon one-hot or domain-wall encoding. These encodings are such that they demonstrate large structural differences in comparison to one another. In this section, we discuss these differences, and point out their various benefits and shortcomings.

To begin, recall that a one-hot encoding of a $k$-variable DQM with $l$ possible values per variable requires $kl$ binary variables, and $kl(kl-1)/2$ pairwise interactions between these variables. Considering a domain-wall encoding of the same problem, only $k(l-1)$ binary variables and $k(l-1)(k(l-1)-1)/2$ pairwise interactions are required. This savings in the number of variables and interactions has the corollary that there are $2^{kl}(1-2^{-k})$ less invalid solutions in a domain-wall QUBO-encoded DQM solution landscape versus a one-hot QUBO-encoded DQM solution landscape. Based on this, a domain-wall QUBO-encoded DQM is more desirable compared to a one-hot QUBO-encoded DQM if we are concerned with spatial resources, such as the number of qubits available on a quantum annealer. (Also, it is generally true that smaller problem sizes of the same kind are solved more quickly with both classical and quantum optimization techniques.)


However, given the fact that domain-wall QUBO-encoded DQM matrix entries involve sums of one-hot QUBO-encoded DQM matrix entries, the largest entries in the domain-wall case can be larger than the largest entries in the one-hot case. The importance of this concerns the limited dynamic range of quantum annealer bias and coupling devices in particular, which may be thought of as the physical instantiations of QUBO matrix entries. The entries of a QUBO matrix must be made to fit within the dynamic range of the bias and coupling devices, and the larger the entries of the QUBO matrix, the more rescaling of these values must take place to ensure this fit, which may be compromising in the face of integrated control errors and noise \cite{Pearson2019}. This suggests that one-hot encodings may better lend themselves to noisy solvers versus domain-wall encodings. However, Berwald et al. note that while domain-wall-encoded DQMs do have larger matrix entries than their one-hot-encoded counterparts (and therefore higher chain strengths), domain-wall encodings sampled by quantum annealing exhibit lower effective temperatures given the problem they consider, and favorable results in comparison to one-hot encoding \cite{Berwald2022}.

Another potential drawback to domain-wall QUBO-encoded DQMs results from the connectivity of the solution space, which allows any valid solution to be transformed into any other valid solution without having to pass through an invalid solution. As a particular example of why this might be detrimental, we consider molecular docking, a structural biology problem. Here, we aim to find low-energy configurations of a set of molecules in a discretized space, typically in 2 or 3 dimensions. The points within this space form the number of choices within our registers, and one molecule is assigned per register. Note that the multi-dimensional real space of the problem is encoded to a higher-dimensional binary space when transformed into a QUBO. This change in topology can place solutions that are far from one another in the real space of the problem exactly adjacent to one another upon domain-wall QUBO-encoding of a DQM. When we are interested in multiple low-energy solutions, it is then possible that a low-energy valid solution within the real space of the problem will not occupy a local minimum in the solution landscape of the QUBO-encoded DQM, by virtue of its being adjacent to another low-energy valid solution.


By contrast, one-hot QUBO-encoded DQMs allows full separation between valid solutions of interest if $\gamma_{OH}$ is selected so that no invalid solution is with an energy below those valid solutions of interest. This is a consequence of Theorems \ref{theorem 3} and \ref{theorem 6}. It is difficult to know in advance if a problem may suffer from a situation similar to that just described; in any case, one has to be wary of this only when more than one valid solution is sought.

Another difference between one-hot and domain-wall QUBO-encoded DQMs concerns whether invalid solutions occupy local minima or not as a function of the penalty parameter. In the one-hot case, we proved in Theorem \ref{theorem_oh_no_invalid_local_minima} that we can select a $\gamma_{OH}$ that guarantees that no invalid solutions occupy local minima, whereas in the the domain-wall case we proved in Theorem \ref{theorem_dw_no_invalid_local_minima} that we cannot guarantee this through selection of any $\gamma_{DW}$. This means that with a one-hot encoding, we can ensure that we sample only valid solutions provided we have selected sufficiently large $\gamma_{OH}$, whereas we cannot avoid the possibility of sampling invalid solutions with a domain-wall encoding in advance. Our results do not suggest how many such invalid solution local minima are unavoidable with a domain-wall encoding, but given that there are $(l-3)^k$ possible solutions comprised of only class $D$ registers, their number may be substantial (as even a solution with one class $D$ register, the remaining registers being valid, may occupy a local minimum under all $\gamma_{DW}$). This remains to be understood rigorously.

Now, apart from our observations regarding solution landscape structures between one-hot and domain-wall QUBO-encoded DQMs, we also point out the following consequence of their encoding structures. Namely, domain-wall QUBO-encoded DQMs lack the ability to efficiently leverage a certain kind of symmetry that might be present in a DQM problem, whereas one-hot QUBO-encoded DQMs can be adapted to accommodate these symmetries in ways which prove more compact than a domain-wall encoding. Again with reference to molecular docking, we can imagine a situation in which we are with $k$ copies of some molecule $M$, to be docked in a space containing $l>k$ points. Usually, we assign one register per molecule in a one-hot encoding, but in this case we are permitted a representation using just one register of $l$ points, requiring that $k$ bits are one for valid solutions, corresponding to the unique placement of the $k$ molecules. In this case, the one-hot penalty of Equation \ref{penalty_oh} becomes a $k$-hot penalty:

\begin{equation}
    H_{KH}^P=\Big(\sum_\alpha b_{i,\alpha}-k\Big)^2
\end{equation}

\noindent over a single register. More generally put, $k$-hot encoding applies to a DQM whenever multiple variables are with the same support and interactions with other variables, and are constrained to select different choices from their support. By contrast, a domain-wall encoding scheme cannot be adapted so readily in this way. Namely, whereas a register of length $l$ can accommodate $l$ bits set to $1$, it can only accommodate $\lceil l/2 \rceil$ domain-walls, under the restriction we have been working with that these domain walls are represented by \texttt{10}, and not \texttt{01}. Given this, certain problems are more compactly represented by an adaption of a one-hot encoding scheme ($k$-hot), or a combination of the $k$-hot and domain-wall encoding schemes, versus the domain-wall encoding scheme. 

Finally, we close our discussion of solution landscape structures in remarking of our proofs of various thresholds on $\gamma_{OH}$ and $\gamma_{DW}$ that they are proofs of existence, and non-constructive. That is, though these thresholds are well-defined in terms of max-min, min-max, or max-max procedures, they generally require evaluating the costs of all valid solutions. Naively calculating these thresholds is therefore extremely difficult, which we might seek to know in order to produce a best-navigable solution landscape in advance. However, this does not preclude their efficient calculation in special cases, nor their efficient approximation in the general case.

\section{Conclusions}\label{conclusions}

Solving a discrete quadratic model as a QUBO model requires translating the DQM to this form, commonly via one-hot or domain-wall encoding. Both encodings introduce invalid solutions to the solution space, and a parameter to penalize these solutions. The solution spaces of each encoding differ in the connectivity between valid solutions, the distribution of local minima, and their response to changing penalty strength. We have conducted a preliminary investigation of these differences, noting the shifting structure of local minima relative to penalty parameter strength, and finding that best selection between a one-hot and domain-wall encoding is problem-dependent.

This work emphasizes the importance of penalty parameter and encoding choice to QUBO-encoded DQM solution landscapes and optimization. Specifically, we find that for one-hot QUBO-encoded DQMs, no invalid solution occupies a local minimum and all valid solutions occupy local minima for sufficiently large penalty strengths. For domain-wall QUBO-encoded DQMs, we cannot in general guarantee no invalid solutions will occupy local minima, nor that all valid solutions occupy local minima.  Given this, and that the differing connectivities and dimensions of one-hot and domain-wall encodings lend themselves differently to answering questions where a diverse set of low-energy valid solutions are sought, we suggest that DQMs need to be encoded as QUBO models on a problem-by-problem basis.


\section*{Acknowledgements}\label{Acknowledgements}

This work was supported in part by a NSERC CREATE in quantum computing grant (543245-2020), as well as an NSERC CGS-M scholarhip (Zaborniak). The research contributions of this article were initiated by the lead author. Both authors discussed the research. The second author contributed to revisions of the manuscript.

\bibliographystyle{IEEEtran}
\bibliography{IEEEabrv, references}

\begin{thebibliography}{10}
\providecommand{\url}[1]{#1}
\csname url@samestyle\endcsname
\providecommand{\newblock}{\relax}
\providecommand{\bibinfo}[2]{#2}
\providecommand{\BIBentrySTDinterwordspacing}{\spaceskip=0pt\relax}
\providecommand{\BIBentryALTinterwordstretchfactor}{4}
\providecommand{\BIBentryALTinterwordspacing}{\spaceskip=\fontdimen2\font plus
\BIBentryALTinterwordstretchfactor\fontdimen3\font minus
  \fontdimen4\font\relax}
\providecommand{\BIBforeignlanguage}[2]{{%
\expandafter\ifx\csname l@#1\endcsname\relax
\typeout{** WARNING: IEEEtran.bst: No hyphenation pattern has been}%
\typeout{** loaded for the language `#1'. Using the pattern for}%
\typeout{** the default language instead.}%
\else
\language=\csname l@#1\endcsname
\fi
#2}}
\providecommand{\BIBdecl}{\relax}
\BIBdecl

\bibitem{Chaovalitwongse2008}
W.~A. Chaovalitwongse, I.~P. Androulakis, and P.~M. Pardalos, ``Quadratic
  integer programming: Complexity and equivalent forms,'' in \emph{Encyclopedia
  of Optimization}.\hskip 1em plus 0.5em minus 0.4em\relax Springer {US}, 2008,
  pp. 3153--3159.

\bibitem{Lucas2014}
A.~Lucas, ``Ising formulations of many {NP} problems,'' \emph{Frontiers in
  Physics}, vol.~2, 2014.

\bibitem{Lodewijks}
B.~Lodewijks, ``Mapping {NP}-hard and {NP}-complete optimisation problems to
  quadratic unconstrained binary optimisation problems,''
  \emph{arXiv:1911.08043}, 2019.

\bibitem{Lawler1963}
E.~L. Lawler, ``The quadratic assignment problem,'' \emph{Management Science},
  vol.~9, no.~4, pp. 586--599, Jul. 1963.

\bibitem{ela1998}
E.~{\c{C}}ela, \emph{The Quadratic Assignment Problem}.\hskip 1em plus 0.5em
  minus 0.4em\relax Springer {US}, 1998.

\bibitem{jensen2011graph}
T.~R. Jensen and B.~Toft, \emph{Graph coloring problems}.\hskip 1em plus 0.5em
  minus 0.4em\relax John Wiley \& Sons, 2011.

\bibitem{marx2004graph}
D.~Marx, ``Graph colouring problems and their applications in scheduling,''
  \emph{Periodica Polytechnica Electrical Engineering (Archives)}, vol.~48, no.
  1-2, pp. 11--16, 2004.

\bibitem{Bian2014}
Z.~Bian, F.~Chudak, R.~Israel, B.~Lackey, W.~G. Macready, and A.~Roy,
  ``Discrete optimization using quantum annealing on sparse {I}sing models,''
  \emph{Frontiers in Physics}, vol.~2, Sep. 2014.

\bibitem{farhi2016quantum}
E.~Farhi and A.~W. Harrow, ``Quantum supremacy through the quantum approximate
  optimization algorithm,'' \emph{arXiv:1602.07674}, 2016.

\bibitem{Aramon2019}
M.~Aramon, G.~Rosenberg, E.~Valiante, T.~Miyazawa, H.~Tamura, and H.~G.
  Katzgraber, ``Physics-inspired optimization for quadratic unconstrained
  problems using a digital annealer,'' \emph{Frontiers in Physics}, vol.~7,
  Apr. 2019.

\bibitem{zahedinejad2017combinatorial}
E.~Zahedinejad and A.~Zaribafiyan, ``Combinatorial optimization on gate model
  quantum computers: A survey,'' \emph{arXiv:1708.05294}, 2017.

\bibitem{9045100}
S.~Matsubara, M.~Takatsu, T.~Miyazawa, T.~Shibasaki, Y.~Watanabe, K.~Takemoto,
  and H.~Tamura, ``Digital annealer for high-speed solving of combinatorial
  optimization problems and its applications,'' in \emph{2020 25th Asia and
  South Pacific Design Automation Conference (ASP-DAC)}, 2020, pp. 667--672.

\bibitem{Crosson2021}
E.~J. Crosson and D.~A. Lidar, ``Prospects for quantum enhancement with
  diabatic quantum annealing,'' \emph{Nature Reviews Physics}, vol.~3, no.~7,
  pp. 466--489, May 2021.

\bibitem{Berwald2022}
J.~Berwald, N.~Chancellor, and R.~Dridi, ``Understanding domain-wall encoding
  theoretically and experimentally,'' \emph{Philosophical Transactions of the
  Royal Society A: Mathematical, Physical and Engineering Sciences}, vol. 381,
  no. 2241, Dec. 2022.

\bibitem{leib2016transmon}
M.~Leib, P.~Zoller, and W.~Lechner, ``A transmon quantum annealer: Decomposing
  many-body {I}sing constraints into pair interactions,'' \emph{Quantum Science
  and Technology}, vol.~1, no.~1, p. 015008, 2016.

\bibitem{dattani1901quadratization}
N.~Dattani, ``Quadratization in discrete optimization and quantum mechanics
  (2019),'' \emph{arXiv:1901.04405}, 1901.

\bibitem{Tamura2021}
K.~Tamura, T.~Shirai, H.~Katsura, S.~Tanaka, and N.~Togawa, ``Performance
  comparison of typical binary-integer encodings in an {I}sing machine,''
  \emph{{IEEE} Access}, vol.~9, pp. 81\,032--81\,039, 2021.

\bibitem{dridi2018novel}
R.~Dridi, H.~Alghassi, and S.~Tayur, ``A novel algebraic geometry compiling
  framework for adiabatic quantum computations,'' \emph{arXiv:1810.01440},
  2018.

\bibitem{invalid}
J.~N\"{u}ßlein, C.~Roch, T.~Gabor, J.~Stein, C.~Linnhoff-Popien, and S.~Feld,
  ``Black box optimization using {QUBO} and the cross entropy method,''
  \emph{arXiv:2206.12510}, 2022.

\bibitem{Verma2022}
A.~Verma and M.~Lewis, ``Penalty and partitioning techniques to improve
  performance of {QUBO} solvers,'' \emph{Discrete Optimization}, vol.~44, p.
  100594, May 2022.

\bibitem{Ayodele2022}
M.~Ayodele, ``Penalty weights in~{QUBO} formulations: Permutation problems,''
  in \emph{Evolutionary Computation in Combinatorial Optimization}.\hskip 1em
  plus 0.5em minus 0.4em\relax Springer International Publishing, 2022, pp.
  159--174.

\bibitem{Kauffman1987}
S.~Kauffman and S.~Levin, ``Towards a general theory of adaptive walks on
  rugged landscapes,'' \emph{Journal of Theoretical Biology}, vol. 128, no.~1,
  pp. 11--45, Sep. 1987.

\bibitem{Stadler}
P.~F. Stadler, ``Towards a theory of landscapes,'' in \emph{Complex Systems and
  Binary Networks}.\hskip 1em plus 0.5em minus 0.4em\relax Springer Berlin
  Heidelberg, 2007, pp. 78--163.

\bibitem{palmer2018optimization}
R.~Palmer, ``Optimization on rugged landscapes,'' \emph{Molecular Evolution on
  Rugged Landscapes}, pp. 3--25, 2018.

\bibitem{Garca2022}
M.~D. Garc{\'{\i}}a, M.~Ayodele, and A.~Moraglio, ``Exact and sequential
  penalty weights in quadratic unconstrained binary optimisation with a digital
  annealer,'' in \emph{Proceedings of the Genetic and Evolutionary Computation
  Conference Companion}.\hskip 1em plus 0.5em minus 0.4em\relax {ACM}, Jul.
  2022.

\bibitem{Chancellor2019}
N.~Chancellor, ``Domain wall encoding of discrete variables for quantum
  annealing and {QAOA},'' \emph{Quantum Science and Technology}, vol.~4, no.~4,
  p. 045004, Aug. 2019.

\bibitem{Codognet2022}
P.~Codognet, ``Domain-wall / unary encoding in {QUBO} for permutation
  problems,'' in \emph{2022 {IEEE} International Conference on Quantum
  Computing and Engineering ({QCE})}.\hskip 1em plus 0.5em minus 0.4em\relax
  {IEEE}, Sep. 2022.

\bibitem{Chakrabarti2022}
B.~K. Chakrabarti, H.~Leschke, P.~Ray, T.~Shirai, and S.~Tanaka, ``Quantum
  annealing and~computation: Challenges and~perspectives,'' \emph{Philosophical
  Transactions of the Royal Society A: Mathematical, Physical and Engineering
  Sciences}, vol. 381, no. 2241, Dec. 2022.

\bibitem{king2019quantum}
J.~King, S.~Yarkoni, J.~Raymond, I.~Ozfidan, A.~D. King, M.~M. Nevisi, J.~P.
  Hilton, and C.~C. McGeoch, ``Quantum annealing amid local ruggedness and
  global frustration,'' \emph{Journal of the Physical Society of Japan},
  vol.~88, no.~6, p. 061007, 2019.

\bibitem{Pearson2019}
A.~Pearson, A.~Mishra, I.~Hen, and D.~A. Lidar, ``Analog errors in quantum
  annealing: doom and hope,'' \emph{npj Quantum Information}, vol.~5, no.~1,
  Nov. 2019.

\bibitem{boothby2021architectural}
K.~Boothby, C.~Enderud, T.~Lanting, R.~Molavi, N.~Tsai, M.~H. Volkmann,
  F.~Altomare, M.~H. Amin, M.~Babcock, A.~J. Berkley \emph{et~al.},
  ``Architectural considerations in the design of a third-generation
  superconducting quantum annealing processor,'' \emph{arXiv:2108.02322}, 2021.

\bibitem{xue2021effects}
C.~Xue, Z.-Y. Chen, Y.-C. Wu, and G.-P. Guo, ``Effects of quantum noise on
  quantum approximate optimization algorithm,'' \emph{Chinese Physics Letters},
  vol.~38, no.~3, p. 030302, 2021.

\bibitem{zucca2021diversity}
A.~Zucca, H.~Sadeghi, M.~Mohseni, and M.~H. Amin, ``Diversity metric for
  evaluation of quantum annealing,'' \emph{arXiv:2110.10196}, 2021.

\bibitem{pandey2022}
M.~Pandey, T.~Zaborniak, H.~Melo, A.~Galda, and V.~K. Mulligan, ``Multibody
  molecular docking on a quantum annealer,'' \emph{arXiv:2210.11401}, 2022.

\bibitem{johnson2011quantum}
M.~W. Johnson, M.~H. Amin, S.~Gildert, T.~Lanting, F.~Hamze, N.~Dickson,
  R.~Harris, A.~J. Berkley, J.~Johansson, P.~Bunyk \emph{et~al.}, ``Quantum
  annealing with manufactured spins,'' \emph{Nature}, vol. 473, no. 7346, pp.
  194--198, 2011.

\bibitem{mohseni2022ising}
N.~Mohseni, P.~L. McMahon, and T.~Byrnes, ``Ising machines as hardware solvers
  of combinatorial optimization problems,'' \emph{Nature Reviews Physics},
  vol.~4, no.~6, pp. 363--379, 2022.

\bibitem{raymond2023hybrid}
J.~Raymond, R.~Stevanovic, W.~Bernoudy, K.~Boothby, C.~C. McGeoch, A.~J.
  Berkley, P.~Farr{\'e}, J.~Pasvolsky, and A.~D. King, ``Hybrid quantum
  annealing for larger-than-{QPU} lattice-structured problems,'' \emph{ACM
  Transactions on Quantum Computing}, vol.~4, no.~3, pp. 1--30, 2023.

\bibitem{grant2022benchmarking}
E.~Grant and T.~S. Humble, ``Benchmarking embedded chain breaking in quantum
  annealing,'' \emph{Quantum Science and Technology}, vol.~7, no.~2, p. 025029,
  2022.

\end{thebibliography}

\end{document}